\newlist{myitemize}{itemize}{3}
\setlist[myitemize,1]{label=\textbullet,leftmargin=0.2in}
\setlist[myitemize,2]{label=$\rightarrow$,leftmargin=1em}
\setlist[myitemize,3]{label=$\diamond$}
\newtheorem{theorem}{Theorem}[section]
\newtheorem{defn}{Definition}[section]
\newtheorem{definition}{Definition}
\DeclareRobustCommand*\cal{\@fontswitch\relax\mathcal}
\begin{document}
%

\title{Scalable Explanation of Inferences on Large Graphs}


\author{
  \IEEEauthorblockN{
  Chao Chen\IEEEauthorrefmark{1}, 
  Yifei Liu\IEEEauthorrefmark{2}, 
  Xi Zhang\IEEEauthorrefmark{2}, 
  Sihong Xie\IEEEauthorrefmark{1}}
  \IEEEauthorblockA{\IEEEauthorrefmark{1}Department of Computer Science and Engineering, Lehigh University, Bethlehem, PA
    \\\{chc517, six316\}@lehigh.edu}
    \IEEEauthorblockA{\IEEEauthorrefmark{2}Beijing University of Posts and Telecommunications, Beijing, China
    \\\{liuyifei, zhangx\}@bupt.edu.cn }
    }
    
\maketitle

\begin{abstract}
Probabilistic inferences distill knowledge from graphs to aid human make important decisions.
Due to the inherent uncertainty in the model and the complexity of the knowledge,
it is desirable to help the end-users understand the inference outcomes.
Different from deep or high-dimensional parametric models,
the lack of interpretability in graphical models
is due to the cyclic and long-range dependencies
and the byzantine inference procedures.
Prior works did not tackle cycles and make \textit{the} inferences interpretable.
To close the gap,
we formulate the problem of explaining probabilistic inferences
as a constrained cross-entropy minimization problem
to find simple subgraphs that faithfully approximate the inferences to be explained.
We prove that the optimization is NP-hard, while the objective
is not monotonic and submodular to guarantee efficient greedy approximation.
We propose a general beam search algorithm
to find simple trees to enhance
the interpretability and diversity in the explanations,
with parallelization and a pruning strategy
to allow efficient search on large and dense graphs without hurting faithfulness.
We demonstrate superior performance on 10 networks from 4 distinct applications,
comparing favorably to other explanation methods.
Regarding the usability of the explanation,
we visualize the explanation in an interface that
allows the end-users to explore the diverse search results
and find more personalized and sensible explanations.
\end{abstract}


\section{Introduction}
Distilling knowledge from graphs
is an important task found ubiquitously in applications,
such as fraud detection~\cite{rayana2015collective},
user interest modeling in social networks~\cite{tang2009relational, tang2011learning, namata2011collective},
and bioinformatics~\cite{li2007protein,huang2011efficient,zitnik2017predicting}.
The knowledge helps humans make high-stake decisions, such as whether to investigate a business or account for fraud detection on Yelp or to conduct expensive experiments on a promising protein in drug discovery.
The state-of-the-art approaches model the graphs as directed or undirected graphical 
models, such as Bayesian networks, Markov Random Fields (MRF)~\cite{kindermann1980markov}, and Conditional Random Fields (CRF)~\cite{sutton2012introduction},
and distill knowledge from the graphs using inferences based on
optimization~\cite{yanover2006linear,hazan2010norm} and message passing~\cite{jo2018fast}.
Different from predictive models on i.i.d. vectors,
graphical models capture the dependencies among random variables and carry more insights for decision making.
However, the inferences involve iterative and recursive computations,
making the inference outcomes cognitively difficult to understand, verify, and ratify,
locking away more applications of graphical models
(EU law requires algorithmic transparency~\cite{goodman2017european})
and more accurate models through debugging~\cite{wachter2017counterfactual}.
We focus on MRFs and belief propagation (BP) inferences~\cite{koller2009probabilistic} that compute marginal distributions,
aiming to make the inference outcomes more interpretable and thus cognitively easier for humans.
Fig.~\ref{fig:bp_example} depicts the problem definition and the proposed solution. 



Several challenges are due.
First, simple but faithful explanations are desired~\cite{lombrozo2006structure}
but
have not been defined for inferences on MRFs.
Prior work~\cite{honorio2012variable,friedman2008sparse}
approximates a high-dimensional Gaussian through a sparse covariance matrix, which does not explain belief propagation.
To explain the marginal distribution on MRFs, using sensitivity analysis, the authors in~\cite{chan2005sensitivity,darwiche2003differential}
proposed to find influential parameters inherent in the model
but not in any particular inference algorithms and computations.
Explanation of parametric linear models and deep networks using surrogate models, differentiation, and feature selection~\cite{ribeiro2016should,koh2017understanding,guyon2003introduction,lou2012intelligible,caruana1999case} cannot be applied to graphical model inferences,
although our proposal can explain inferences on deep graphical models~\cite{johnson2016composing}.
Explainable RNNs~\cite{lei2016rationalizing} handles linear chains but not general MRFs.
In terms of usability,
previous works have studied how to visualize explanations of other models and utilize the explanations in the end tasks such as model debugging~\cite{stumpf2008integrating}.
It is less know what's the best \textit{graph} explanation complexity and faithfulness trade-off for the end-users, how to effectively communicate the probabilistic explanations, and how to utilize the explanations.

\begin{figure}[t]
    \centering
    \includegraphics[width=0.5\textwidth]{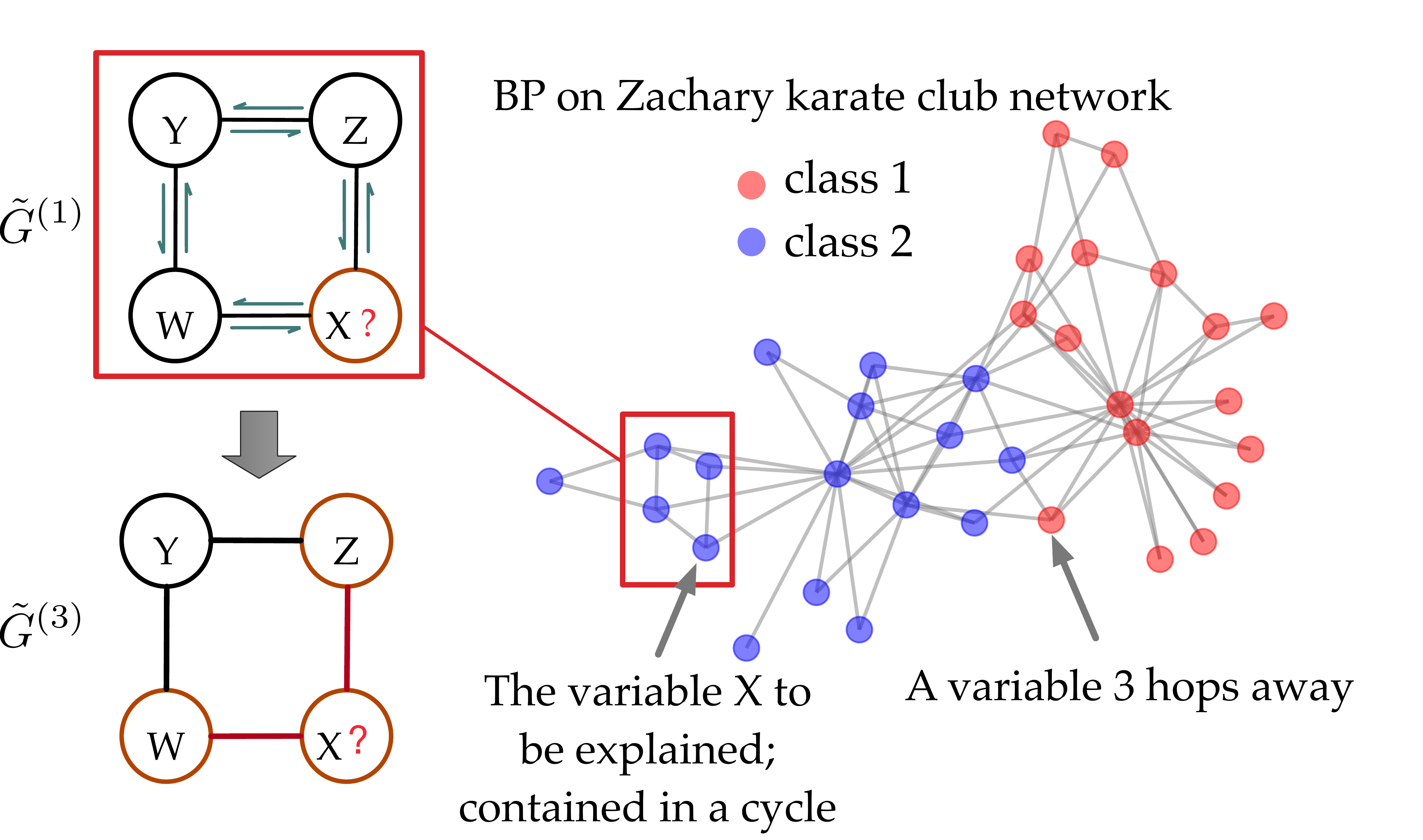}
    \caption{A cyclic graphical model $G$ for the Zachary karate club network,
    with BP inference outcomes shown in two colors.
    We focus on explaining how BP calculates the belief on $X$, highlighted in the subgraph $\tilde{G}^{(1)}$.
    Due to the cycles and long-range dependencies on $G$,
    a complete explanation is recursive and long.
    With messages, beliefs, and priors, GraphExp extracts a limited-size \textit{tree} $\tilde{G}^{(3)}$, on which $X$ has a marginal similar to that on $G$.
    }
    \label{fig:bp_example}
    \vspace{-.2in}
\end{figure}
Second, algorithmically,
an MRF can be large, densely connected, and cyclic, while
simple and faithful explanations need to be found efficiently.
BP computes each message using other messages
iteratively and recursively until convergence (Eq. (\ref{eq:sum_prod})).
As a result, a message is a function of other messages
and a complete explanation requires the entire history of message computations,
possibly from multiple iterations.
Furthermore, on cyclic graphs,
a converged message can be defined recursively by itself,
generating explanations such as ``a user is fraudulent because it is fraudulent''.
A desirable explanation should be without recursion,
but cutting out the cycles may result in a different model and affect faithfulness.



We propose a new approach called ``GraphExp'' to address the above challenges.
Given the full graphical model $G$ and any target variable $X$ (the ``explanandum'') on $G$, 
GraphExp finds an ``explanans''
(or the ``explaining'') graphical model $\tilde{G}$
consisting of a smaller number of random variables and dependencies.
The goal of GraphExp is to
minimize the loss in faithfulness measured by
the symmetric KL-divergence between the marginals of $X$ inferred on $\tilde{G}$ and $G$~\cite{suermondt1993explanation}.
Starting from the graph $\tilde{G}^{(1)}$ consisting of $X$ only,
GraphExp greedily includes the next best variable into the previous subgraph
so that the enlarged subgraph has the lowest loss.
Theoretically we prove that:
\textbf{(1)} an exhaustive search for the optimal
$\tilde{G}$ with highest faithfulness (lowest loss) is NP-hard,
and furthermore,
the objective function is neither monotonic nor submodular, leading to the lack of a performance guarantee of any greedy approximation (Theorem~\ref{thm:submodular});
\textbf{(2)} GraphExp only generates acyclic graphs that are more explainable (Theorem~\ref{thm:trees}).

There can exist multiple sensible explanations for the same inference outcome~\cite{russell2019efficient,ross2017right} and an end-user can find the one that best fits her mental model.
We equip GraphExp with beam search~\cite{batra2012diverse} to discover
a set of distinct, simple, and faithful explanations for a target variable.
Regarding scalability, when looking for $\tilde{G}$ on densely connected graphs,
the branching factor in the search tree can be too large for fast search.
While the search is trivially parallelizable,
we further propose a safe pruning strategy that retains the desirable candidates
while cutting the search space down significantly (Fig.~\ref{fig:speedup}).
Regarding usability, GraphExp does not commit to a single explanation
but allows the users to select one or multiple most sensible explanations for further investigation (Section~\ref{sec:user_study}).
We highlight the contributions as follows:
\begin{myitemize}
\item We define the problem of explaining belief propagation
to end-users and study the communication and utility of the explanations.
\item We propose an optimization problem for finding simple, faithful, and diverse explanations.
We prove the hardness of the problem
and then propose GraphExp as a greedy approximation algorithm to solve the problem.
We analyze the time complexity and the properties of the output graphs,
Both parallel search and the proposed pruning strategy deliver at least linear speed-up on large datasets.
\item Empirically,
on 10 networks with up to millions of nodes and edges from 4 domains,
GraphExp explains BP faithfully
and significantly outperforms variants of GraphExp and other explanation methods not designed for graphs.
We propose visualization that allows flexible, intuitive of the found explanations.
To demonstrate the utility of the found explanations, we identify a security issue in Yelp spam detection using the found subgraphs.
\end{myitemize}
\begin{table}[t]
    \centering
    \small
    \caption{Comparing GraphExp with prior explanation methods: FS (Feature Selection), LIME, GSparse (graph sparsification), and GDiff (graph differentiation).
    ($\ast$: ``surely yes''; $\circ$: ``partially''; emptiness: ``no'').
    }
    \begin{tabular}{c||cccc|c}
                    & \rotatebox[origin=c]{90}{FS~\cite{guyon2003introduction}}
                    & \rotatebox[origin=c]{90}{LIME~\cite{ribeiro2016should}}
                    & \rotatebox[origin=c]{90}{GSparse~\cite{friedman2008sparse}}
                    & \rotatebox[origin=c]{90}{GDiff~\cite{chan2005sensitivity}}
                    & \rotatebox[origin=c]{90}{GraphExp}\\\hline
        Cycles handling & & & $\ast$ & $\ast$ & $\ast$ \\
        Completeness & & & $\circ$ & $\ast$ & $\ast$ \\
        Interpretability & $\circ$ & $\ast$ & $\circ$ & $\circ$ & $\ast$ \\
        Diversity & $\circ$ & $\circ$ & & & $\ast$ \\
        Scalability & & $\circ$ & & $\circ$ & $\ast$ \\
        Flexibility & & $\circ$ & & $\circ$ & $\ast$ \\
    \end{tabular}
    \label{tab:differences}
    \vspace{-.2in}
\end{table}
\section{Problem definition}
\label{sec:prelim}
Notation definitions are summarized in Table~\ref{tab:definitions}.
\begin{table}[t]
    \caption{ Notation Definitions }
    \label{tab:definitions}
    \centering
    \small
    \begin{tabular}{|c||c|}\hline
        Notation & Definition\\\hline
        $G=(V, E)$ & Undirected graphical model (MRF)\\
        $V, E$ & Random varaibles and their connections\\
        $X_i, X, Y$ ($x_i, x, y$) & Random variables (and their values)\\
        $\phi_X(x)$ (or $\phi_i(x_i)$) & Prior probability distribution of $X$ (or $X_i$)\\
        $\psi_{XY}(x,y)$ & Compatibility matrix between $X$ and $Y$\\
        $m_{X\to Y}(y)$ & Message passed from $X$ to $Y$\\
        $b_X(x)\triangleq P(X=x)$ & Marginal distribution (belief) of $X$\\
        KL$(p||q)$ & KL Divergence between $p$ and $q$\\
        $\partial G^\prime$ & Variables in $G\setminus G^\prime$ connected to subgraph $G^\prime$\\
        ${\cal N}(X_i)$ & Neighbors of $X_i$ on $G$\\
        \hline
    \end{tabular}
\end{table}
Given a set of $n$ random variables $V=\{X_1,\dots,X_n\}$,
each taking values in $\{1,\dots, c\}$ where $c$ is the number of classes,
an MRF $G$
factorizes the joint distribution $P(V)$ as
\begin{equation}
\label{eq:joint}
P(V)=\frac{1}{Z}\prod_{i\in [n]} \phi_i(X_i) \prod_{i,j\in [n]}\psi_{ij}(X_i, X_j),
\end{equation}
where $Z$ normalizes the product to a probability distribution, $\phi_i$ is the prior distribution of $X_i$ without considering other variables.
The compatibility $\psi_{ij}(x_i, x_j)$ encodes how likely the pair
$(X_i, X_j)$ will take the value $(x_i, x_j)$ jointly
and capture the dependencies between variables.
The factorization can be represented by a graph consisting of
$X_1,\dots, X_n$ as nodes and edges $(X_i, X_j)$, as shown in Fig.~\ref{fig:bp_example}.
BP inference computes the marginal distributions (beliefs) $b_X$, $X\in V$,
based on which human decisions can be made.
The inference computes messages $m_{i\to j}(x_j)$ from $X_i$ to $X_j$:
\begin{equation}
\label{eq:sum_prod}
\frac{1}{Z_j}\sum_{x_i}
\left[\psi_{ij}(x_i, x_j)\phi_i(x_i) \prod_{k\in {\cal N}(X_i)\setminus \{j\}} m_{k\to i}(x_i)\right],
\end{equation}
where $Z_j$ is a normalization factor so that $m_{i\to j}$ is a probability distribution of $x_j$.
The messages in both directions on all edges $(i,j)\in E$ are updated  until convergence (guaranteed when $G$ is acyclic~\cite{pearl2014probabilistic}).
The marginal $b_X$ is the belief
\begin{equation}
\label{eq:belief}
    b(X) \propto \phi(X)\prod_{Y\in {\cal N}(X)} m_{Y\to X}(X),
\end{equation}
where ${\cal N}(X)$ is the neighbors of $X$ on $G$.
We aim to explain how the marginal is inferred by BP.
For any $X\in V$,
$b_X$ depends on messages over all edges reachable from $X$
and to completely explain how $b_X$ is computed,
one has to trace down each message in Eq. (\ref{eq:belief}) and Eq. (\ref{eq:sum_prod}).
Such a complete explanation is hardly interpretable due to two factors:
1) on large graphs with long-range dependencies, messages and variables far away from
$X$ will contribute to $b_X$ indirectly through many steps;
2) when there is a cycle,
BP needs multiple iterations to converge and a message can be recursively defined by itself~\cite{kang2010inference,jo2018fast}.
A complete explanation of BP can keep track of all these computations on a call graph~\cite{ryder1979constructing}.
However, the graph easily becomes too large to be intuitive for humans to interpret or analyze.
Rather, $b_X$ should be approximated using short-range dependencies
without iterative and recursive computations.
The question is, without completely follow the original BP computations,
how will the approximation affected?
To answer this question,
we formulate the following optimization problem:
\begin{definition}
Given an MRF $G$, and a target node $X\in V$,
extract another MRF $\tilde{G}\subset G$ with 
$X\in \tilde{G}$ and containing no more than $C$ variables and no cycle,
so that BP computes similar marginals $b_X$ and $\tilde{b}_X$
on $G$ and $\tilde{G}$, respectively.
Formally, solve the following
\begin{equation}
\label{eq:opt}
\arraycolsep=1.4pt\def\arraystretch{1.5}
    \begin{array}{cc}
         \min\limits_{\tilde{G}} & d(b_X, \tilde{b}_X) 
        =\textnormal{KL}(b_X||\tilde{b}_X) +
        \textnormal{KL}(\tilde{b}_X||b_X) \\
         \textnormal{s.t.} & \tilde{G}\subset G, 
         \hspace{.1in}
         |\tilde{G}|\leq C,
         \hspace{.1in}
         X \in \tilde{G},
         \hspace{.1in}
         \tilde{G} \textnormal{ acyclic}.
    \end{array}
\end{equation}
\end{definition}
The objective captures the faithfulness of $\tilde{G}$, measured by the symmetric KL-divergence $d$ between marginal distributions of $X$ on $G$ and $\tilde{G}$, where
\begin{equation}
\label{eq:kl-div}
\textnormal{KL}(P||Q)= \sum_{x=1}^{c} P(x)\log[P(x) / Q(x)].
\end{equation}
The choice of $d$ as a faithfulness measured can be justified:
KL$(b_X||\tilde{b}_X)$ measures the loss when the ``true'' distribution is $b_X$ while the explaining distribution is $\tilde{b}_X$~\cite{suermondt1993explanation}.
Symmetrically, a user can regard $\tilde{b}_X$ as the ``true'' marginal, which can be explained by the $b_X$.
The simplicity of $\tilde{G}$ can be measured
by the number of variables on $\tilde{G}$,
and
for $\tilde{G}$ to be interpretable,
we control the size of $\tilde{G}$ to be less than $C$.
Since a graphical model encodes a joint distribution of a set of variables,
the above problem is equivalent to searching a joint distribution of a smaller number of variables with fewer variable dependencies,
so that the two joint distributions lead to similar marginals of the variable $X$ to be explained.

If the negation of the above objective function is submodular and monotonically increasing,
then a greedy algorithm that iteratively builds $\tilde{G}$ by adding one variable at a time to increase the negation of objective function (namely, to decrease $d(b_X, \tilde{b}_X)$)
can generate a solution whose value is within $(1-1/e)$ of the optimum~\cite{nemhauser1978analysis}.
\begin{defn}[Submodularity]
Let ${\cal V}$ be a set and $2^{\cal V}$ be the power set of ${\cal V}$.
A set function $f:2^{\cal V}\to \mathbb{R}$ is submodular if
$\forall {\cal A}\subset {\cal A}^\prime\subset {\cal V}$ and any $Y\not\in {\cal A}^\prime$, $f({\cal A}\cup \{Y\})-f({\cal A})\geq f({\cal A}^\prime \cup \{Y\}) - f({\cal A}^\prime)$.
\end{defn}
\begin{defn}
A set function $f:2^{\cal V}\to \mathbb{R}$
is monotonically increasing if
$\forall {\cal A}\subset {\cal A}^\prime\subset {\cal V}$, implies $f({\cal A})\leq f({\cal A}^\prime)$.
\end{defn}
\noindent
However, we prove that both properties do not hold on $d(b_X, \tilde{b}_X)$, which cannot be efficiently approximated well.

\begin{theorem}
\label{thm:submodular}
The objective function in Eq. (\ref{eq:opt}) is not submodular nor monotonically increasing.
\end{theorem}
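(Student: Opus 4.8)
The plan is to establish both negative claims by explicit counterexamples, since submodularity and monotonicity are universally quantified conditions and a single small MRF that violates each inequality suffices. The guiding intuition is that BP beliefs are built from \emph{multiplicative} message combinations (Eq.~(\ref{eq:sum_prod})) that are activated only through connecting paths, while the symmetric KL objective $d$ in Eq.~(\ref{eq:opt}) is a nonlinear divergence; neither the additive behavior needed for monotonicity nor the diminishing-returns behavior needed for submodularity should survive these two sources of nonlinearity. Throughout I regard $f({\cal A})=d(b_X,\tilde{b}_X)$ as a set function of the chosen variable set ${\cal A}\ni X$, where $\tilde{b}_X$ is the belief computed on the subgraph of $G$ induced by ${\cal A}$, and I will choose all witness subsets to be acyclic so they lie in the feasible region of Eq.~(\ref{eq:opt}).

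First I would refute \textbf{monotonicity}. Take a two-node MRF $G$ with edge $X\text{--}Y$ and a nondegenerate compatibility $\psi_{XY}$, so that the message $m_{Y\to X}$ is not uniform and hence $b_X\neq\phi_X$. With ${\cal A}=\{X\}$ the induced subgraph is the single node $X$, so $\tilde{b}_X=\phi_X$ and $f(\{X\})=d(b_X,\phi_X)>0$. With ${\cal A}'=\{X,Y\}$ we have $\tilde{G}=G$, so $\tilde{b}_X=b_X$ and $f(\{X,Y\})=0$. Thus ${\cal A}\subset{\cal A}'$ yet $f({\cal A})>f({\cal A}')$, contradicting the definition of monotone increasing. (More generally, whenever $\tilde{G}=G$ the objective is $0$, its global minimum, so $f$ can never be monotone increasing unless it is identically $0$.)

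Next I would refute \textbf{submodularity} by engineering \emph{increasing} returns. Consider an MRF whose edges are $X\text{--}Z$, $Z\text{--}Y$, and $Z\text{--}W$ (so $X$ and $Y$ are \emph{not} adjacent), with binary variables. Set ${\cal A}=\{X\}$, ${\cal A}'=\{X,Z\}$, and let $Y$ be the added element; note $W\in V$ but $W\notin{\cal A}'$ and $W\neq Y$. Because $Y$ attaches to $X$ only through $Z$, the induced subgraph on $\{X,Y\}$ carries no $X\text{--}Y$ edge, so $\tilde{b}_X=\phi_X$ and the marginal change is $f(\{X,Y\})-f(\{X\})=0$. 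With $Z$ already present, however, adding $Y$ alters the message $m_{Z\to X}$ through Eq.~(\ref{eq:sum_prod}). I would pick $\phi$ and the compatibilities $\psi_{ZY},\psi_{ZW}$ so that $Y$ and $W$ push $Z$ in \emph{opposing} directions: in the full graph $G$, the message into $X$ integrates both influences and leaves $b_X$ close to the prediction of the partial path $\{X,Z\}$, whereas the subgraph on $\{X,Z,Y\}$ omits the balancing neighbor $W$ and therefore \emph{overshoots}. This yields $f(\{X,Z,Y\})>f(\{X,Z\})$, i.e. $f({\cal A}'\cup\{Y\})-f({\cal A}')>0=f({\cal A}\cup\{Y\})-f({\cal A})$, which directly violates the submodularity inequality.

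The main obstacle is the last step: I expect the conceptual mechanism (a balancing neighbor $W$ that is present in $b_X$ but absent from the candidate subgraph) to be clearly correct, but certifying the strict inequality $f(\{X,Z,Y\})>f(\{X,Z\})$ requires actually running the BP updates of Eq.~(\ref{eq:sum_prod})--(\ref{eq:belief}) on a concrete numeric instance and checking the sign of the resulting symmetric KL, which is nonlinear and not obviously monotone in the message perturbations. I would therefore fix explicit numeric priors and $2\times 2$ compatibility matrices and verify the inequality by direct computation, using the opposing-influence design to guarantee the overshoot. The same family of instances simultaneously exhibits the non-additive, path-gated behavior that underlies both failures, so a single carefully chosen MRF can witness the non-monotonicity and the non-submodularity together.
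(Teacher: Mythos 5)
Your proposal is correct for the theorem as literally stated, and it reaches the conclusion by the same high-level strategy as the paper (explicit counterexamples on tiny MRFs), but with genuinely different constructions. The paper uses a single three-node star centered at $X$ with fixed numeric priors ($\phi(X)=[0.5,0.5]$, $\phi(Y)=[0.8,0.2]$, $\phi(Z)=[0.1,0.9]$, homophily potentials $0.99/0.01$) and checks by running BP that the marginal gain of adding $Y$ to $G_1=\{X,Z\}$ exceeds that of adding $Y$ to $G_0=\{X\}$, and that adding $Y$ to $G_0$ increases the objective. Your monotonicity argument is cleaner and strictly more general: since $\tilde{G}=G$ drives the objective to its global minimum $0$, the objective can never be monotonically increasing unless it is identically zero --- this needs no numerics at all (just fix the minor imprecision: a nondegenerate $\psi_{XY}$ alone does not force $m_{Y\to X}$ non-uniform; you also need a non-uniform $\phi_Y$). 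Your submodularity example, with the balancing neighbor $W$ and the non-adjacent $Y$ forcing a zero marginal gain at $\{X\}$, is also sound, and the deferred numeric check can in fact be closed analytically: take $\phi_Y$ and $\phi_W$ to be mirror images with uniform $\phi_Z,\phi_X$ and a symmetric potential, so the messages into $Z$ cancel exactly, giving $f(\{X\})=f(\{X,Y\})=f(\{X,Z\})=0<f(\{X,Z,Y\})$. One subtlety deserves flagging: your counterexample violates submodularity of $d$ itself (increasing marginal returns), which matches the theorem's literal wording, whereas the paper's displayed inequality $(-d_{G_2})-(-d_{G_0})<(-d_{G})-(-d_{G_1})$ violates submodularity of the \emph{negated} objective $-d$ --- the property actually required for the $(1-1/e)$ greedy guarantee invoked in the surrounding text --- and neither violation implies the other ($d$ non-submodular is equivalent to $-d$ non-supermodular, not to $-d$ non-submodular). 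Indeed, in your construction the $-d$ gains are $0$ at the smaller set and negative at the larger one, which is consistent with $-d$ being submodular and monotone there. So your proof establishes the stated theorem while the paper's proof establishes the motivating claim about $-d$; a complete treatment would note this sign asymmetry, and the paper itself is sloppy on exactly this point. Also note that your intermediate set $\{X,Y\}$ is disconnected; this is legitimate under the paper's stated constraints (subgraph of $G$ containing $X$, acyclic --- connectivity is never required), though it falls outside the connected trees that the search algorithm actually visits.
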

\begin{proof}
We find a counterexample that violates submodularity and monotonicity.
As shown in Fig.~\ref{fig:submod},
the full graph $G$ has 3 variables $X$, $Y$ and $Z$, with $X$ connected to $Y$ and $Z$, respectively.
The priors are $\phi(X)=[0.5, 0.5]$, $\phi(Y)=[0.8, 0.2]$,
and $\phi(Z)=[0.1, 0.9]$, and both edges have 
the same homophily-encouraging potential $\psi_{ij}(a, b)=0.99$ if $a=b$ and $0.01$ otherwise.

Let subgraphs $G_0=\{X\}$, $G_1=\{X, Z\}$ and $G_2=\{X, Y\}$
be as shown in the figure.
One can run BP on $G_i,i=0,1,2$ to find $\tilde{b}_X$ and that
$(-d_{G_2}(b_X, \tilde{b}_{X}))-(-d_{G_0}(b_X, \tilde{b}_{X}))<
(-d_{G}(b_X, \tilde{b}_{X}))-(-d_{G_1}(b_X, \tilde{b}_{X}))$,
with the subscriptions $G_i$ indicating on which subgraph is $\tilde{b}_{X}$ computed.
However, $G_0\subset G_1$ and the gain through adding $Y$ to $G_1$ is greater than that adding $Y$ to $G_0$.
On the same example, we can see that adding $Y$ to $G_0$ can increase the objective. 
\end{proof}

\begin{figure}[t]
    \centering
    \includegraphics[width=0.4\textwidth]{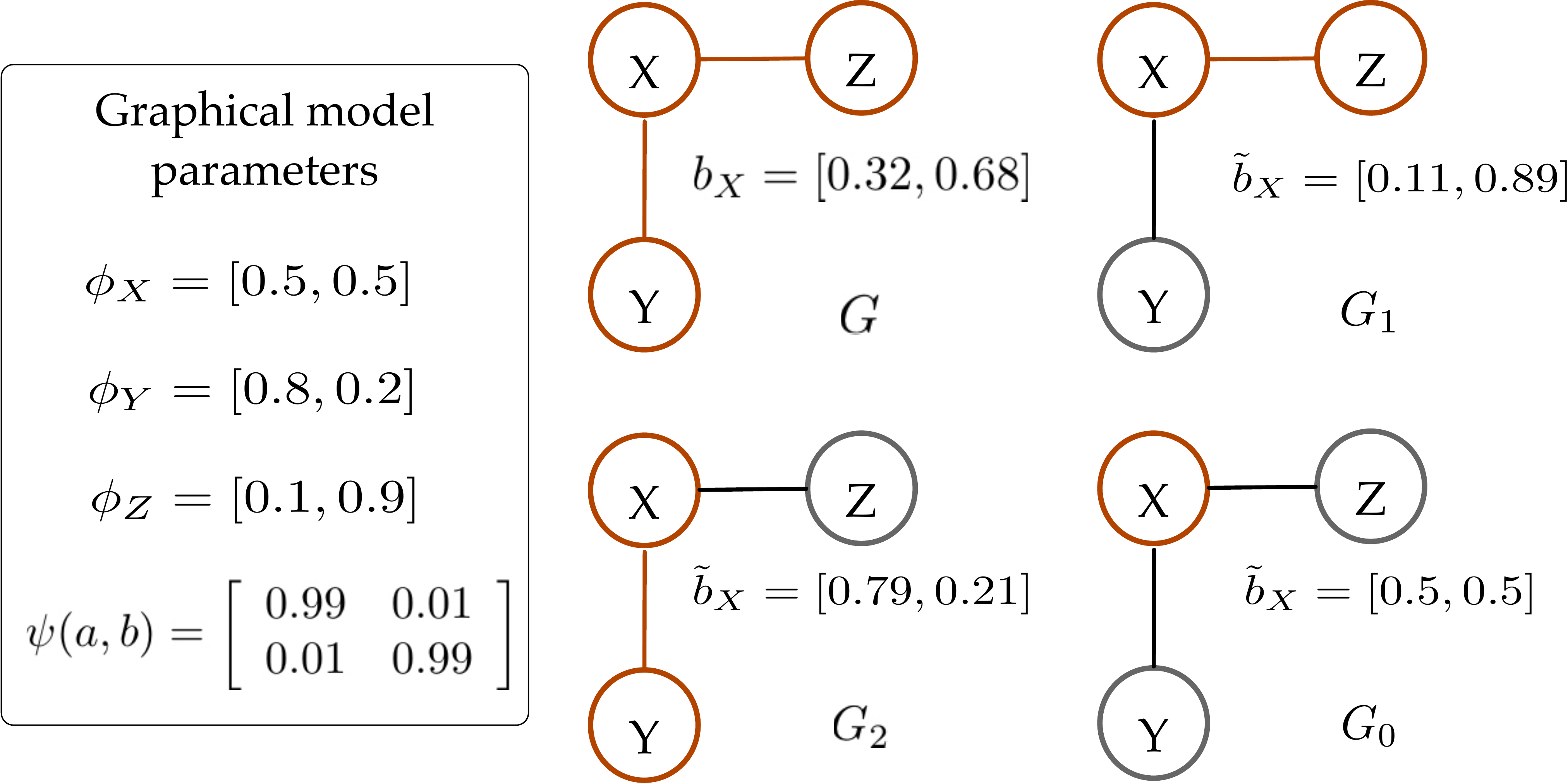}
    \caption{A graphical model on which submodularity and monotonicity of the objective function in Eq. (\ref{eq:opt}) do not hold.
    The goal is to find the best subgraph from $\{G_0, G_1, G_2\}$ that best
    approximates the belief of $X$ inferred on $G$.
    Edges and nodes in red are those included in the individual subgraphs.
    }
    \label{fig:submod}
    \vspace{-.2in}
\end{figure}
\section{Methodologies}
\label{sec:method}
The optimization problem Eq. (\ref{eq:opt}) can be solved by exhaustive search in the space of all possible trees under the specified constraints and 
it is thus a combinatorial subset maximization problem and NP-hard~\cite{guyon2003introduction,amaldi1998approximability}, similar to exhaustive feature selection.
Greedy algorithms are a common solution to approximately solve NP-hard problems.
Since finding multiple alternative sensible explanations is one of our goals,
we adopt beam search in the greedy search~\cite{batra2012diverse}, maintaining in a beam several top candidates ranked by faithfulness and succinctness
throughout the search.

\begin{algorithm}
\caption{GraphExp (the general search framework)}
\label{alg:beam_search}
\begin{algorithmic}
\STATE \textbf{Input}:a graphical model $G = (V,E)$; a target random variable $X\in V$ to be explained;
\STATE prior $\phi_i$ and belief $b_i$, $\forall X_i \in V$; messages $m_{i\to j}$ for $\forall (X_i,X_j) \in E$;
\STATE maximum subgraph complexity $C$; beam size $k$.
\STATE \textbf{Output}:
Multiple explaining subgraphs $\tilde{G}$ for $X$, along with approximated computations of $b_X$.
\STATE \textbf{Init:} $\tilde{G}^{(1)}=(\tilde{V}, \tilde{E})$, $\tilde{V}=\{X\},\tilde{E}=\emptyset$. \texttt{Beam[1]}=$\{\tilde{G}^{(1)}\}$
\FOR{ $t=2\to C $ }
    \STATE \texttt{Beam[t]}$=\emptyset$.
    \FOR{each subgraph $\tilde{G}^{(t-1)}$ in \texttt{Beam[t-1]}}
    \STATE Find and add the top $k$ extensions of $\tilde{G}^{(t-1)}$ to \texttt{Beam[t]}.
    \ENDFOR
    \STATE Retain the top $k$ candidates in \texttt{Beam[t]}.
\ENDFOR
\STATE Run BP on each candidate graph in \texttt{Beam[C]} and obtain converged messages and beliefs as an approximation of computation of $b_X$ on $G$.
\end{algorithmic}
\end{algorithm}
A general greedy beam search framework is presented in Algorithm~\ref{alg:beam_search}.
The algorithm 
finds multiple explaining subgraphs $\tilde{G}^{(C)}$ of size $C$ in $C-1$ iterations, where $C$ is the maximum subgraph size
(which roughly equals to human working memory capacity~\cite{miller1956magical}, or the number of cognitive chunks~\cite{lage2019evaluation}).
Starting from the initial $\tilde{G}^{(1)}=\{X\}$,
at each step $t$ the graph $\tilde{G}^{(t-1)}$
is extended to $\tilde{G}^{(t)}$ by adding one more explaining node and edge to optimize a certain objective function without forming a loop.
After the desired subgraphs are found and right before the algorithm exits,
BP will be run again on $\tilde{G}$ to compute
$\tilde{b}_X$ so that we can use the converged messages on $\tilde{G}$
to explain to an end-user how $b_X$ is approximated on $\tilde{G}$.
Since $\tilde{G}$ is small and contains no cycle,
the explanation is significantly simpler than the original computations on $G$.
We substantiate the general framework in the following two sections, with two alternative ways to rank candidate extensions of the subgraphs during the beam search.
Before we discuss the two concrete algorithms,
a theoretical aspect related to interpretability is characterized below.
\begin{theorem}
\label{thm:trees}
The output $\tilde{G}$ from Algorithm~\ref{alg:beam_search} is a tree.
\end{theorem}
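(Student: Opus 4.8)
The plan is to argue by induction on the iteration index $t$ (equivalently, on the number of variables in the current subgraph), showing that \emph{every} subgraph $\tilde{G}^{(t)}$ maintained in the beam is a tree, so that in particular the output $\tilde{G}^{(C)}$ is one. Recall that a graph on $t$ nodes is a tree precisely when it is connected and acyclic, in which case it has exactly $t-1$ edges; I would track all three quantities (connectivity, acyclicity, and the edge count) through the induction as a combined invariant.

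For the base case, $\tilde{G}^{(1)}=(\{X\},\emptyset)$ is a single isolated node with no edge, which is trivially connected and acyclic, hence a tree with $0 = 1-1$ edges. For the inductive step, suppose $\tilde{G}^{(t-1)}$ is a tree with $t-1$ nodes and $t-2$ edges. The crucial point is the structure of how Algorithm~\ref{alg:beam_search} extends it: each extension adds exactly one new variable $Y$ together with exactly one new edge. Since any candidate $Y$ is drawn from the boundary $\partial \tilde{G}^{(t-1)}$ — the variables in $G\setminus\tilde{G}^{(t-1)}$ adjacent to the current subgraph — the single new edge necessarily joins $Y$ to some existing node $W\in\tilde{G}^{(t-1)}$. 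I would then verify the two tree conditions: (i) connectivity is preserved, because $\tilde{G}^{(t-1)}$ is connected and $Y$ attaches to it through the edge $(Y,W)$; and (ii) acyclicity is preserved, because $Y$ has degree one in $\tilde{G}^{(t)}$ and a degree-one vertex cannot lie on any cycle, so any cycle of $\tilde{G}^{(t)}$ would already be a cycle of $\tilde{G}^{(t-1)}$, contradicting the inductive hypothesis. The node and edge counts update to $t$ and $t-1$, preserving the $t-1$-edge invariant. Thus $\tilde{G}^{(t)}$ is again a tree, and taking $t=C$ finishes the argument.

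The step that most deserves care — and where the ``without forming a loop'' clause in the extension rule does its work — is pinning down the extension operation precisely: that exactly one node and one edge are added, and that the added edge makes the new node a \emph{leaf} rather than introducing a chord among already-present nodes. Once this is established, the classical fact that attaching a leaf to a tree yields a tree makes both connectivity and acyclicity immediate, and the remainder is bookkeeping on the edge count. I expect no genuine difficulty beyond stating this extension step carefully and noting that it applies identically to each of the $k$ candidates held in the beam, so that every member of $\texttt{Beam[C]}$, and hence any output $\tilde{G}$, is a tree.
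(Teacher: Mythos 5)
Your proof is correct and follows essentially the same route as the paper's: induction on the search iterations, observing that each extension attaches the new variable as a leaf via a single edge, so no cycle can be created. Your version is slightly more complete in that you also track connectivity and the edge count explicitly (the paper's proof only verifies acyclicity and leaves connectivity implicit), but the underlying argument is identical.
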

\begin{proof}
    We prove this by induction.
    Base case: $\tilde{G}^{(0)}$ is a single node so it is a tree.
    Inductive step: Assume that $\tilde{G}^{(t)}$ is a tree.
    By adding one more variable and edge ($X^\ast$ and $e^\ast$) to $\tilde{G}^{(t)}$, there is no cycle created, since we are only attaching $X^\ast$ to $\tilde{G}^{(t)}$ through a single edge $e^\ast$.
\end{proof}

\begin{figure}[t]
    \centering
    \includegraphics[width=0.47\textwidth]{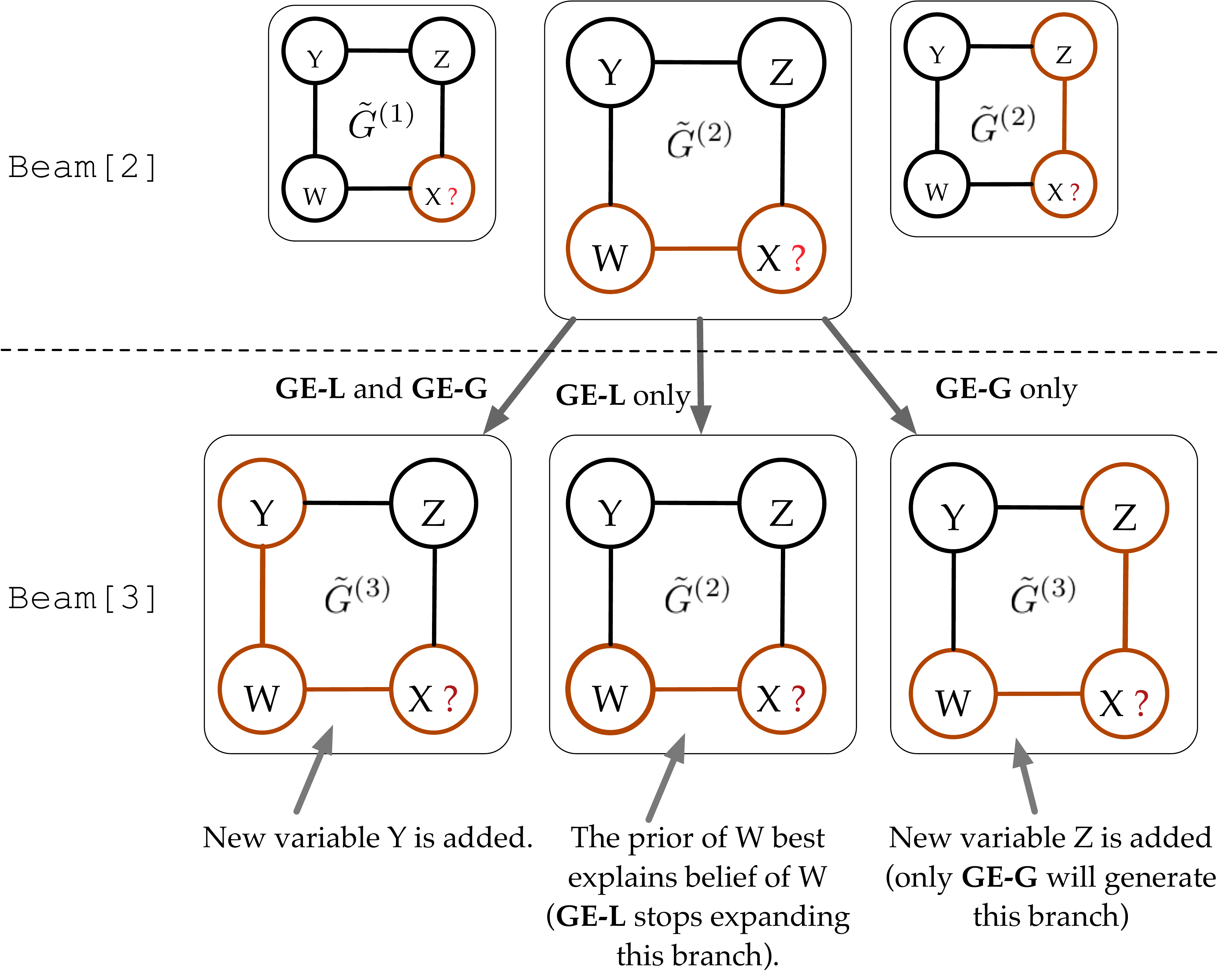}
    \caption{
    Explaining BP on a graphical model with four variables ($X$, $Y$, $Z$, and $W$),
    four edges, and a cycle.
    We show one step of beam search,
    extending a candidate $\tilde{G}^{(2)}$ in \texttt{Beam[2]} 
    to three candidates of $\tilde{G}^{(3)}$ in \texttt{Beam[3]} (with beam size 3).
    The rightmost candidate in \texttt{Beam[2]} is also extended but the extensions are not shown to avoid clutters.
    After all extensions are generated, only the top 3 are kept in \texttt{Beam[3]}.
    The edges and variables in red are included in the explaining subgraphs.
    Note that \textbf{GE-G} and \textbf{GE-L} have different behaviors when dealing with the extensions.
    }
    \label{fig:beam_search}
    \vspace{-.2in}
\end{figure}

\subsection{GraphExp-Global (\textbf{GE-G}): search explanations via evaluating entire subgraphs}
\label{sec:geg}
We propose \textbf{GE-G},
an instantiation of Algorithm~\ref{alg:beam_search}.
At iteration $t$ $(t=2,\dots, C)$, the algorithm evaluates
the candidate extensions of $\tilde{G}^{(t-1)}$ 
using the objective Eq. (\ref{eq:opt}).
Define $\partial \tilde{G}^{(t-1)}$
be the set of nodes in $G\setminus \tilde{G}^{(t-1)}$ that are connected to $\tilde{G}^{(t-1)}$.
A candidate $\tilde{G}^{(t)}$ is generated by adding a variable $Y\in \partial\tilde{G}^{(t-1)}$
through the edge $(Y,W)$ to $\tilde{G}^{(t-1)}$, where $W$ is a random variable in $\tilde{G}^{(t-1)}$.
A new BP procedure is run on $\tilde{G}^{(t)}$ to infer $\tilde{b}_X$, the marginal of $X$, and the distance $d(b_X,\tilde{b}_X)$ is calculated as the quality of the candidate $\tilde{G}^{(t)}$.
After exhausting all possible candidates,
\textbf{GE-G} adds the $k$ candidates with the least distances to \texttt{Beam[t]}.

One search step in Fig.~\ref{fig:beam_search} demonstrates the above ideas.
The search attempts to extend the middle $\tilde{G}^{(2)}$ in \texttt{Beam[2]}
by adding a new variable from $\partial{\tilde{G}^{(2)}} = \{Y, Z\}$ to the subgraph,
so the new belief $\tilde{b}_X$ on the larger subgraph is closer to $b_X$.
In this example,
\textbf{GE-G} keeps the top 3 extensions of $\tilde{G}^{(2)}$ (beam size is 3),
but only two options are legitimate and are the only two candidates included in \texttt{Beam[3]}. 
The middle subgraph consisting of $\{W, X\}$ is generated by the algorithm \textbf{GE-L} to be discussed later (see Section~\ref{sec:gel}).
When the search extends the bottom right subgraph $\tilde{G}^{(3)}$,
variable $Y\in \partial \tilde{G}^{(3)}$ can be connected to $\tilde{G}^{(3)}$ in two ways, through edges $Y\to Z$ and $Y\to W$, but both GraphExp variants include only one link to avoid cycles in the exlaining subgraphs.

On the high-level, \textbf{GE-G} is similar to forward wrapper-style feature selection algorithms,
where each feature is evaluated by including it to the set of selected features and running a target classification model on the new feature sets.
The key difference here is that {GE-G} can't select any variable on $G$,
but has to restrict itself to those that will result in an acyclic graph (which is guaranteed by Theorem~\ref{thm:trees}).

One of the advantages of \textbf{GE-G} is that the objective function in the optimization problem Eq.~(\ref{eq:opt}) is minimized directly at each greedy step.
However, as each candidate at each step requires a separate BP,
it can be time-consuming.
We analyze the time complexity below.
To generate a subgraph of maximum size $C$ for a variable,
$C-1$ iterations are needed.
At iteration $t=2,\dots, C$, one has to run BP for as many times as the number of neighboring nodes of the current explanation $\tilde{G}^{(t-1)}$.
The number of candidates that need to be evaluated for one of the $k$ candidates in \texttt{Beam[t-1]} in the iteration equals the size of the number of neighboring nodes in $\partial{\tilde{G}}^{(t-1)}$.
On graphs with a small diameter,
this size grows quickly to the number of nodes in $G$. 
On the other extreme, if $G$ is a linear chain,
this size is no more than 2.
For each BP run, it is known that BP will converge in the number of iterations same as the diameter of the graph that BP is operated on,
which is upper-bounded by the size of the candidate subgraph $\tilde{G}^{(t)}$.
During each BP iteration, $O(t)$ messages have to be computed.
The overall time complexity of \textbf{GE-G} is
$O(|V|k\sum_{t=2}^{C} |\partial \tilde{G}^{(t-1)}| t^2)$, where $k$ is the beam size.
Since the number of classes on the variables are fixed and usually small (relative to the graph size), here we ignore the factor $O(c^2)$, which is the time complexity to compute one message.

\subsection{Speeding up \textbf{GE-G} on large graphs}
Graphical models in real-world applications are usually gigantic
containing tens or hundreds of thousands of nodes.
GraphExp can take a long time to finish on such large graphs,
especially when the graph diameter is small.
Slowness can harm the usability of GraphExp in applications requiring interpretability,
for example,
when a user wants to inspect multiple explanations of BP for counterfactual analysis, or statistics of the errors needs to be computed over explanations of many nodes~\cite{wachter2017counterfactual}.
We propose parallelized search and a pruning strategy to speed up \textbf{GE-G}.

\noindent\textbf{Parallel search}
The general GraphExp algorithm
can be parallelized on two levels.
First, the generation of explanations over multiple target variables can be executed on multiple cores.
Second, in the evaluation of the next extensions of $\tilde{G}^{(t-1)}$ during beam search,
multiple candidates $\tilde{G}^{(t)}$ can be tried out at the same time on multiple cores.
Particularly for \textbf{GE-G},
during the BP inference over each candidate $\tilde{G}^{(t)}$,
there are existing parallel algorithms that compute the messages~\cite{gonzalez2009residual} asynchronously.
As the subgraphs $\tilde{G}^{(t)}$ are bounded by the human cognitive capacity and are small, a parallel inference can be an overkill.
We evaluate the reduction in search time using the first level of parallelism (Section~\ref{sec:scalability}).



\noindent\textbf{Pruning candidate variables}
In Algorithm~\ref{alg:beam_search}, all candidates $e\in C=(G\setminus\partial{\tilde{G}^{(t)}}, \partial{\tilde{G}^{(t)}})$ have to be evaluated to select $(X^\ast, e^\ast)$ and we have to run BP as many times as $|C|$.
When the cut $C$ is large, this can be costly. 
As we aim at explaining how BP infers the marginal of the target $Y$,
adding any variable that has a distribution that deviates much from the distribution of $Y$ is not helpful but confusing.
Considering that a subgraph ${\tilde{G}^{(t-1)}}$ has $|C|$ candidates at step-$t$, we run BP on these $|C|$ candidates and abandon the bottom $(100-p)$ percent of them based on Eq. \ref{eq:opt} in the following steps. 

\subsection{GraphExp-Local (\textbf{GE-L}): search explanations via local message back-tracing}
\label{sec:gel}
Sometimes one may want to trade explanation faithfulness for speed during subgraph search.
For example, in the exploratory phase,
a user wants to get a quick feeling of how the inferences are made or to identify mistakes caused by glitches on the graph before digging deeper into finer-grained explanations.
We propose \textbf{GE-L} for this purpose to complement \textbf{GE-G} that can generate more faithful and detailed explanations (with the expense of more searching time).
\textbf{GE-L} is based on message back-tracing that follows the general beam search but with more constraints on the search space.
At iteration $t$, the search adds an additional edge between $\tilde{G}^{(t-1)}$ and $\partial\tilde{G}^{(t-1)}$
that best explains a message or a belief in
$\tilde{G}^{(t-1)}$, which can be computed using 
information that is not currently in $\tilde{G}^{(t-1)}$.
There are two cases.
\begin{itemize}
    \item 
For a message $m_{Y\to W}$ on an edge $(Y, W)$ already included in $\tilde{G}^{(t-1)}$,
the search attempts to find a message $m_{Z\to Y}$,
where $Z\in \partial\tilde{G}^{(t-1)}$,
so that the message $m_{Z\to Y}$ contributes most to the message $m_{Y\to W}$.
We use the distance $d$ defined in Eq. (\ref{eq:opt}) to measure the contribution of $m_{Z\to Y}$ to $m_{Y\to W}$: the smaller the distance,
the more similar two messages are and thus more contribution from $m_{Z\to Y}$
to $m_{Y\to W}$.
\item
For the belief $b_X$ of the target node $X$,
the search attempts to find a message $m_{W\to X}$ that best explains $b_X$,
using the distance between $m_{W\to X}$ and $b_X$.
\end{itemize}
In both cases,
we define the \textbf{end points} of $\tilde{G}^{(t-1)}$ as those nodes that are in $\tilde{G}^{(t-1)}$ but can be connected to those in $\partial\tilde{G}^{(t-1)}$.
In the example in Fig.~\ref{fig:bp_example},
$Y$ and $X$ are end points of the subgraph $\tilde{G}^{(2)}$ in the middle. 
In \textbf{GE-L},
if the prior of an end-point
best explains the message emitting from the end-point (e.g., $m_{Y\to X}$ in Fig.~\ref{fig:bp_example}) or belief of the end-point ($b_X$ in the same figure),
the prior is added to the subgraph and no extension can be made to the end point:
the prior is a parameter of the graphical model and not computed by BP,
and no other BP distribution can further explain the prior.
The search of \textbf{GE-L} will stop at this branch, although the other branches in the beam can further be extended.
Using the same example in Fig.~\ref{fig:bp_example},
the prior $\phi_{W}$ best explains $m_{W\to X}$) and
this branch is considered finished requiring no more extension.


\textbf{Analysis of \textbf{GE-L}}
We analyze what the above search is optimizing.
We unify the two cases where the search
explains how a message and belief are computed.
Assuming that the potential function $\psi_{ij}$ for all edges $(X_i, X_j)$ are
identity matrices, which is the case when a graph exhibits strong homophily.
Then the message going from $X_i$ to $X_j$ in Eq. (\ref{eq:sum_prod}) is proportional to
\begin{equation}
\phi_i(x_i) \prod_{k\in {\cal N}(X_i)\setminus \{j\}} m_{k\to i}(x_i),
\end{equation}
which is in the same form of Eq. (\ref{eq:belief}) when a belief is computed.
Therefore, both messages and beliefs can be written in the form of $P(x)=\prod_{\ell=1}^{L}p_\ell(x)$,
where $L$ is the number of factors in the product.
Let $Q=\prod_{m=1}^{M}q_m(x)$ be a distribution that explains $P$ with $M$ factors and with $X$ being an end-point to be explained.
If $Q=P$, then the distance is 0 but many edges (factors) are included in $\tilde{G}$.
Starting from a uniform distribution, the search each time finds an approximating distribution $Q(x)$ by including an additional factor in $P$ to minimize $d(P,Q)$ over all distributions $P$ representing the so-far included messages and beliefs computed by BP on $G$, and over all factors (messages or priors)
that have not been included but are contributing to $P$.
Therefore, the algorithm does not directly attempt to optimize the objective $d(b_X, \tilde{b}_X)$ for the target $X$,
but does so locally: it keeps adding more factors to best explain one of the next end-points, which can already explain the target variable to some extent.

\textbf{Variants of \textbf{GE-L}}
To further speed up \textbf{GE-L} (see Fig.~\ref{fig:speedup}),
especially on graphs that a user has prior knowledge about the topology of the explaining subgraph, its search space can be further constrained.
On the one hand, we can only extend a candidate on the end-point that is added most recently, creating a chain of variables so that one is explaining the other.
This aligns with the conclusion that causal explanations are easier to be understood by the end-users~\cite{lombrozo2006structure}, and our explanations of the inference are indeed causal: how $\tilde{b}_X$ is computed by a smaller subgraph will be presented to the end-users.
On the other hand, when a target variable has many incoming messages (which is the case on social and scientific networks),
it is best to spend the explaining capacity on direct neighbors.
In the experiments,
we adopt these two constraints over \textbf{GE-L}
on the review and remaining networks, respectively.

\section{Experiments}
In this section, we examine the explanation faithfulness and interpretability of \textbf{GE-L}, \textbf{GE-G}, and the state-of-the-art baselines, including \textbf{LIME}, in ten different networks from four domains (spam detection, citation analysis, social networks, bioinformatics). We also evaluate the scalability and sensitivity analysis of these methods. Moreover, we conduct user study to demostrate the usability of \textbf{GE-G}.

\subsection{Datasets}
We drew datasets from four applications.
First, we adopt the same three Yelp review networks (YelpChi, YelpNYC, and YelpZip) from~\cite{rayana2015collective} for spam detection tasks. We represent reviewers, reviews, and products and their relationships (reviewer-review and review-product connections) by an MRF. 
BP can infer the labels (suspicious or normal) of reviews on the MRF given no labeled data but just prior suspiciousness probability computed from meta-data~\cite{rayana2015collective}.
Second, in collective classification,
we construct an MRF for each one of three citation networks (Cora, Citeseer, and PubMed)
that contain papers as nodes and undirected edges as paper citation relationships~\cite{namata2011collective}.
As a transductive learning algorithm,
BP can infer the distributions of paper topics of an unlabeled paper,
given labeled nodes in the same network.
Third, we represent blogs (BlogCatalog), videos (Youtube), and users (Flickr) as nodes and behaviors including subscription and tagging as edges \citep{tang2009relational}. BP infers the preferences of users. 
The goal is to classify social network nodes into multiple classes.
Lastly, in biological networks, 
we adopt the networks analyzed in~\cite{zitnik2017predicting} which denotes nodes as protein-protein pairs and the subordination relations of protein pair as the class.
Explaining BP inference is important in all these applications: the MRFs are in general large and cyclic for a user to thoroughly inspect why a paper belongs to a certain area,
or why a review is suspicious, or why a blog is under a specific topic, 
or why two proteins connect to each other.
It is much easier for the user to verify the inference outcome on much smaller explaining graphs.
The statistics of the datasets are shown in Table~\ref{tab:datasets}.
\begin{table}[t]
    \caption{Ten networks from four application domains.}
    \label{tab:datasets}
    \scriptsize
    \centering
    \begin{tabular}{c|c|c|c|c}
    \toprule
    \textbf{Datasets} & \textbf{Classes} & \textbf{Nodes} & \textbf{Edges} & \textbf{edge/node} \\
    \midrule
    \textbf{YelpChi} & 2 & 105,659 & 269,580 & 2.55 \\
    \textbf{YelpNYC} & 2 & 520,200 & 1436,208 & 2.76 \\
    \textbf{YelpZip} & 2 & 873,919 & 2434,392 & 2.79 \\
    \midrule
    \midrule
    \textbf{Cora} & 7 & 2,708 & 10,556 & 3.90 \\
    \textbf{Citeseer} & 6 & 3,321 & 9,196 & 2.78 \\
    \textbf{PubMed} & 3 & 1,9717 & 44,324 & 2.25 \\
    \midrule
    \midrule
    \textbf{Youtube} & 47 & 1,138,499 & 2,990,443 & 2.63 \\
    \textbf{BlogCatalog} & 39 & 10,312 & 333,983 & 32.39 \\
    \textbf{Flickr} & 195 & 80,513 & 5,899,882 & 73.28 \\
    \midrule
    \midrule
    \textbf{Bioinformatics} & 144 & 13,682 & 287,916 & 21.04 \\
    \bottomrule
    \end{tabular}
\end{table}

\subsection{Experimental setting}

On Yelp review networks, a review has two and only two neighbors (a reviewer posts the review and the product receives the review), while a reviewer and a product can be connected to multiple reviews.
On the remaining networks, nodes are connected to other nodes without constraints of the number of neighbors and the type of nodes. 
We apply the two variants of GE-L on Yelp and other networks, respectively. Psychology study shows that human beings can process about seven items at a time \cite{miller1956magical}. To balance the faithfulness and interpretability, both \textbf{GE-L} and \textbf{GE-G} search subgraphs that are of maximum size five starting from the target node. In the demon,
we explore larger explaining subgraphs to allow users to select the smallest subgraph that makes the most sense.

On all ten networks,
we assume homophily relationships between any pair of nodes.
For example,
a paper is more likely to be on the same topic of the neighboring paper,
and two nodes are more likely to be suspicious or normal at the same time.
On Yelp, we set node priors and compatibility matrices according to~\cite{rayana2015collective}.
On other networks,
we assign 0.9 to the diagonal and $\frac{0.1}{c-1}$ to the rest of the compatibility metrics. 
As for priors, 
we assign 0.9 to the true class of a labeled node,
and $\frac{0.1}{c-1}$ to the remaining classes, where $c$ is the number of classes in data.
For unlabeled nodes, we set uniform distribution over classes.
On Yelp, there is no labeled node, 
and on Youtube network, 5\% are labeled nodes. 
For the remaining networks, we set the ratio of labeled data as 50\%. 
With consideration the size of the large networks, we sample 1\% from the unlabeled nodes as target nodes on Youtube and Flickr datasets, and sample 20\% of unlabeled nodes as target nodes on BlogCatalog and Bioinformatics networks.

\begin{table*}[!htbp]
\caption{Overall performance: in general \textbf{Comb} is the best method that significantly outperforms all other methods on all networks.
We use $\bullet$ to indicate whether statistically \textbf{GE-L} significantly (pairwise t-test at 5\% significance level) outperforms \textbf{Random} and whether \textbf{Comb}
outperforms \textbf{GE-G} ($k$=3), respectively.
}
\label{tab:overall}
\centering
\footnotesize
\begin{tabular}{ c | c c | c c | c c c c}
\toprule

\textbf{Method} & 
\textbf{Embedding} & \textbf{LIME} & 
\textbf{Random} & 
\textbf{GE-L} & 
\textbf{Random} & 
\textbf{GE-G ($k$=1)} & 
\textbf{GE-G ($k$=3)} & \textbf{Comb} \\

\midrule
Chi
& \makecell[c]{0.058[5.0]}
& \makecell[c]{5.300} 
& \makecell[c]{0.053[3.9]$\bullet$}
& \makecell[c]{0.036[3.9]} 
& \makecell[c]{0.022[5.0]} 
& \makecell[c]{0.0012[5.0]}
& \makecell[c]{0.0012[5.0]$\bullet$}
& \makecell[c]{\textbf{0.0006}[6.5]}
\\
NYC
& \makecell[c]{0.084[5.0]}
& \makecell[c]{5.955} 
& \makecell[c]{0.043[4.1]$\bullet$}
& \makecell[c]{0.028[4.1]} 
& \makecell[c]{0.017[5.0]} 
& \makecell[c]{0.0012[5.0]}
& \makecell[c]{0.0011[5.0]$\bullet$}
& \makecell[c]{\textbf{0.0006}[6.2]}
\\
Zip
& \makecell[c]{0.084[5.0]}
& \makecell[c]{6.036} 
& \makecell[c]{0.040[4.2]$\bullet$}
& \makecell[c]{0.025[4.2]} 
& \makecell[c]{0.010[5.0]} 
& \makecell[c]{0.0014[5.0]}
& \makecell[c]{0.0013[5.0]$\bullet$}
& \makecell[c]{\textbf{0.0008}[6.1]}
\\

\midrule
\midrule
Cora
& \makecell[c]{0.527[4.8]}
& \makecell[c]{1.321}
& \makecell[c]{0.362[3.6]$\bullet$}
& \makecell[c]{0.181[3.6]}
& \makecell[c]{0.594[4.8]}
& \makecell[c]{0.137[4.8]}
& \makecell[c]{0.132[4.9]$\bullet$}
& \makecell[c]{\textbf{0.084}[6.4]}
\\
Citeseer 
& \makecell[c]{0.305[4.4]}
& \makecell[c]{1.221}
& \makecell[c]{0.243[2.9]$\bullet$}
& \makecell[c]{0.108[2.9]}
& \makecell[c]{0.340[4.4]}
& \makecell[c]{0.077[4.4]}
& \makecell[c]{0.075[4.4]$\bullet$}
& \makecell[c]{\textbf{0.048}[5.7]}
\\
PubMed
& \makecell[c]{0.842[5.0]} 
& \makecell[c]{0.910}
& \makecell[c]{0.718[3.1]}
& \makecell[c]{0.577[3.1]}
& \makecell[c]{0.893[5.0]}
& \makecell[c]{0.188[5.0]}
& \makecell[c]{0.185[5.0]$\bullet$}
& \makecell[c]{\textbf{0.098}[7.1]}
\\
\midrule
\midrule
Youtube
& \makecell[c]{0.340[5.0]}
& \makecell[c]{-}
& \makecell[c]{0.376[2.8]}
& \makecell[c]{0.321[2.8]}
& \makecell[c]{0.343[5.0]}
& \makecell[c]{0.263[5.0]}
& \makecell[c]{0.264[5.0]}
& \makecell[c]{\textbf{0.225}[6.7]}
\\
Flickr  
& \makecell[c]{5.903[5.0]}
& \makecell[c]{-}
& \makecell[c]{6.259[4.7]}
& \makecell[c]{6.232[4.7]}
& \makecell[c]{6.018[5.0]}
& \makecell[c]{4.654[5.0]}
& \makecell[c]{4.652[5.0]}
& \makecell[c]{\textbf{4.111}[7.4]}
\\
BlogCatalog
& \makecell[c]{7.887[5.0]}
& \makecell[c]{-}
& \makecell[c]{7.899[4.8]}
& \makecell[c]{8.054[4.8]}
& \makecell[c]{7.867[5.0]}
& \makecell[c]{6.621[5.0]}
& \makecell[c]{6.702[5.0]}
& \makecell[c]{\textbf{6.343}[7.9]}
\\
\midrule
\midrule
Bioinformatics 
& \makecell[c]{2.065[5.0]}
& \makecell[c]{-}
& \makecell[c]{2.085[4.9]}
& \makecell[c]{1.893[4.9]}
& \makecell[c]{2.116[5.0]}
& \makecell[c]{1.423[5.0]}
& \makecell[c]{1.508[5.0]}
& \makecell[c]{\textbf{1.356}[5.6]}
\\
\bottomrule
\end{tabular}
\end{table*}

\subsection{Baselines}

\textbf{Random} It ignores the messages computed by BP and selects a node in $\partial\tilde{G}^{(t-1)}$ randomly when extending $\tilde{G}^{(t-1)}$. To compare fairly, \textbf{Random} searches the subgraph of the same structure as those found by \textbf{GE-L} and \textbf{GE-G}, respectively.

\textbf{Embedding} It constructs subgraphs with the same size as those found by \textbf{GE-G}. 
However, it utilizes DeepWalk\cite{perozzi2014deepwalk} to obtain node embeddings, 
based on which top candidate nodes similar to the target are included to explain the target variable.

\textbf{LIME} \cite{ribeiro2016should} It is the state-of-the-art
black-box explanation method that works for classification models when input data are vectors rather than graphs.
We randomly select 200 neighbors of each target node in the node feature vector space, with sampling probability weighted by the cosine similarity between the neighbors and the target. The feature vector is defined either as \cite{rayana2015collective}
(on Yelp review networks) or bag-of-words (on the citation networks).
A binary/multiclass logistic regression model is then fitted on the sample and used to approximate the decision boundary around the target.
\textbf{LIME} is less efficient than the subgraph search-based approaches since a new classification model has to be fitted for each target variable.
\textbf{LIME} explains 30\% of all review nodes, randomly sampled on Yelp review datasets, 
and explain all unlabeled variables on the citation networks.
It cannot explain nodes in the remaining four networks due to the lack of feature vectors, which is one of the drawbacks of \textbf{LIME}.

\textbf{Comb} It aggregates all candidate subgraphs from \texttt{Beam[C]} into a single graph as an explanation.
This method can aggregate at most $k$ (beam size) subgraphs and have at most $kC$ variables.
Here, we set $k$=3 and report the performance of the combined subgraph.
The performances of the top candidate in \texttt{Beam[C]} with $k$=1 and $k$=3 are reported under \textbf{GE-G} ($k$=1)
and \textbf{GE-G} ($k$=3).

\subsection{Explanation Accuracy}


\noindent\textbf{Overall Performance}
Quantitative evaluation metrics
For each method, we run BP on the extracted subgraph $\tilde{G}$ for each target variable $X$ to obtain a new belief $\tilde{b}_X$ (except for \textbf{LIME} that does not construct subgraphs).
Explanation faithfulness is measured by the objective function in Eq. (\ref{eq:opt}).
In Table~\ref{tab:overall}, we report the mean of the performance metric overall target variables, and the best methods are boldfaced. We also report the average size of explaining subgraphs in the square brackets after individual means.

From Table~\ref{tab:overall}, we can conclude that: 
1) \textbf{Comb} always constructs the largest subgraph and performs best, due to multiple alternative high-quality explanations from the beam branches.
2) \textbf{GE-G} ($k$=3) is the runner up and better than \textbf{GE-G} ($k$=1), because the search space of \textbf{GE-G} ($k$=3) is larger than \textbf{GE-G}($k$=1)'s. 
3) The performance of \textbf{Embedding} is not very good, but still better than \textbf{LIME} in all cases. \textbf{LIME} has the worst performance, as it is not designed to explain BP and cannot take the network connections into account.
4) Faithfulness is positively correlated to subgraph size. 

\noindent\textbf{Spam detection explanations}
On Yelp review networks,
\textbf{GE-L} generates chain-like subgraphs. 
The average subgraph size is around four,
even though the maximum capacity is five. 
This is because \textbf{GE-L} focuses on local information only and stops early when the prior of the last added node best explains the previously added message.
Both \textbf{GE-G} versions extend the subgraph to the maximum size to produce a better explanation.
Notice that \textbf{Random} performs better when imitating \textbf{GE-G} ($k$=1) than when imitating \textbf{GE-L}. 
The reason is that there are only two types of neighboring nodes of the target node
and \textbf{Random} 
imitating \textbf{GE-G} has a higher chance to include the better neighbor and also generates larger subgraphs.



\noindent\textbf{Collective classification tasks}
In these tasks, \textbf{GE-L} constructs star-like subgraphs centered at the target node.
On Cora and Citeseer, the performance of \textbf{GE-L} is closer to (but still inferior to) \textbf{GE-G} with both beam sizes,
compared to their performance difference on Yelp.
This is because the Cora and Citeseer networks consist of many small connected components, most of which contain less than five nodes, essentially capping \textbf{GE-G}'s ability to add more explaining nodes to further bring up the faithfulness (evidenced by the average size of subgraphs found by \textbf{GE-G} smaller than 5). 
Compared with Yelp review networks, interestingly, \textbf{Random} imitating \textbf{GE-G} generates larger subgraphs but performs worse than \textbf{Random} imitating \textbf{GE-L}. 
The reason is that \textbf{GE-G} can add nodes far away from the target node and the random imitation will do the same.
However, without the principled guidance in \textbf{GE-G},
\textbf{Random} can add nodes those are likely in the other classes than the class of the target node.


\subsection{Scalability}
\label{sec:scalability}

\begin{figure}[]
\centering
\begin{minipage}{.25\textwidth}
    \label{fig:parallel_Chi}
    \includegraphics[width=\textwidth]{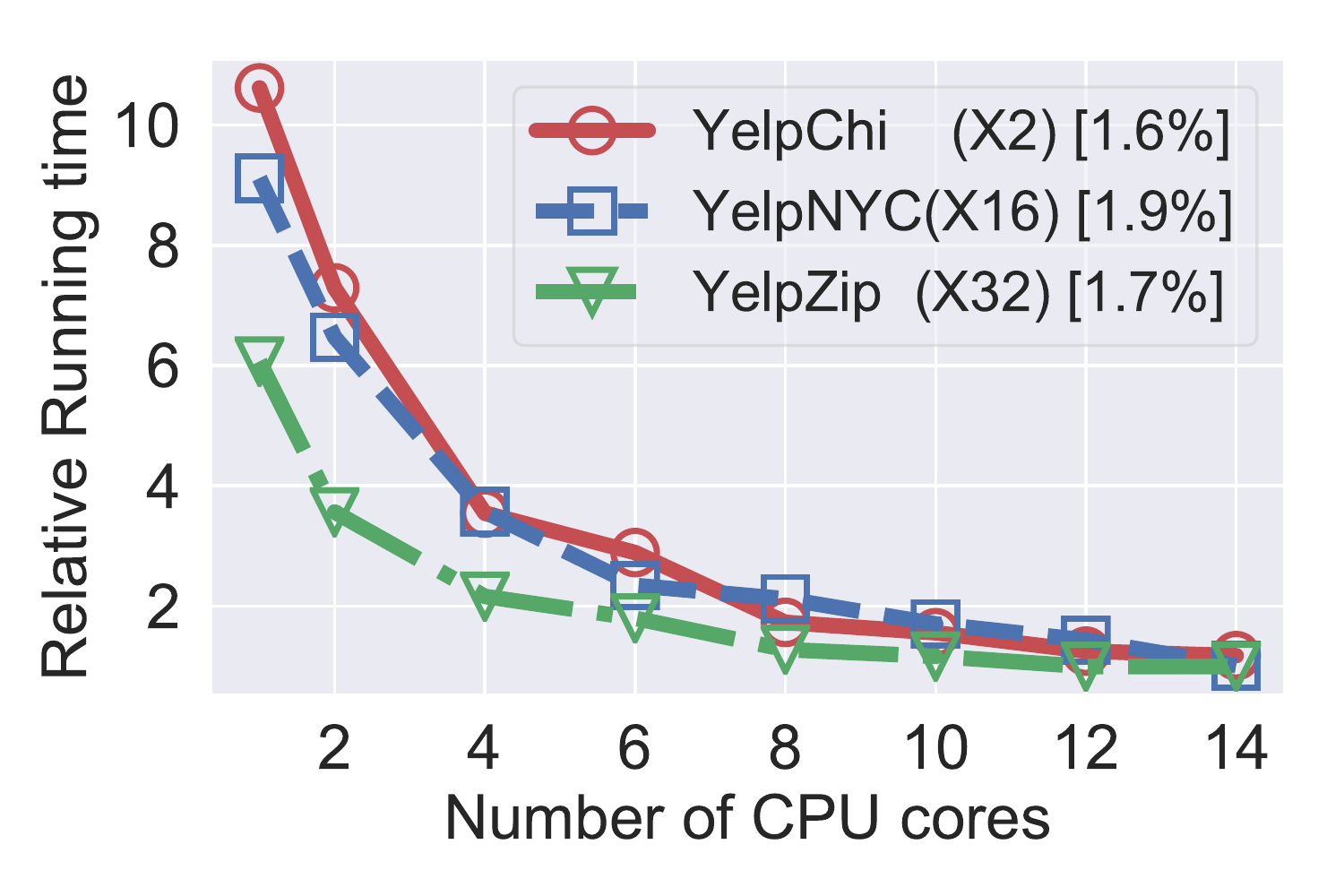}
\end{minipage}%
\begin{minipage}{.25\textwidth}
    \label{fig:parallel_Chi}
    \includegraphics[width=\textwidth]{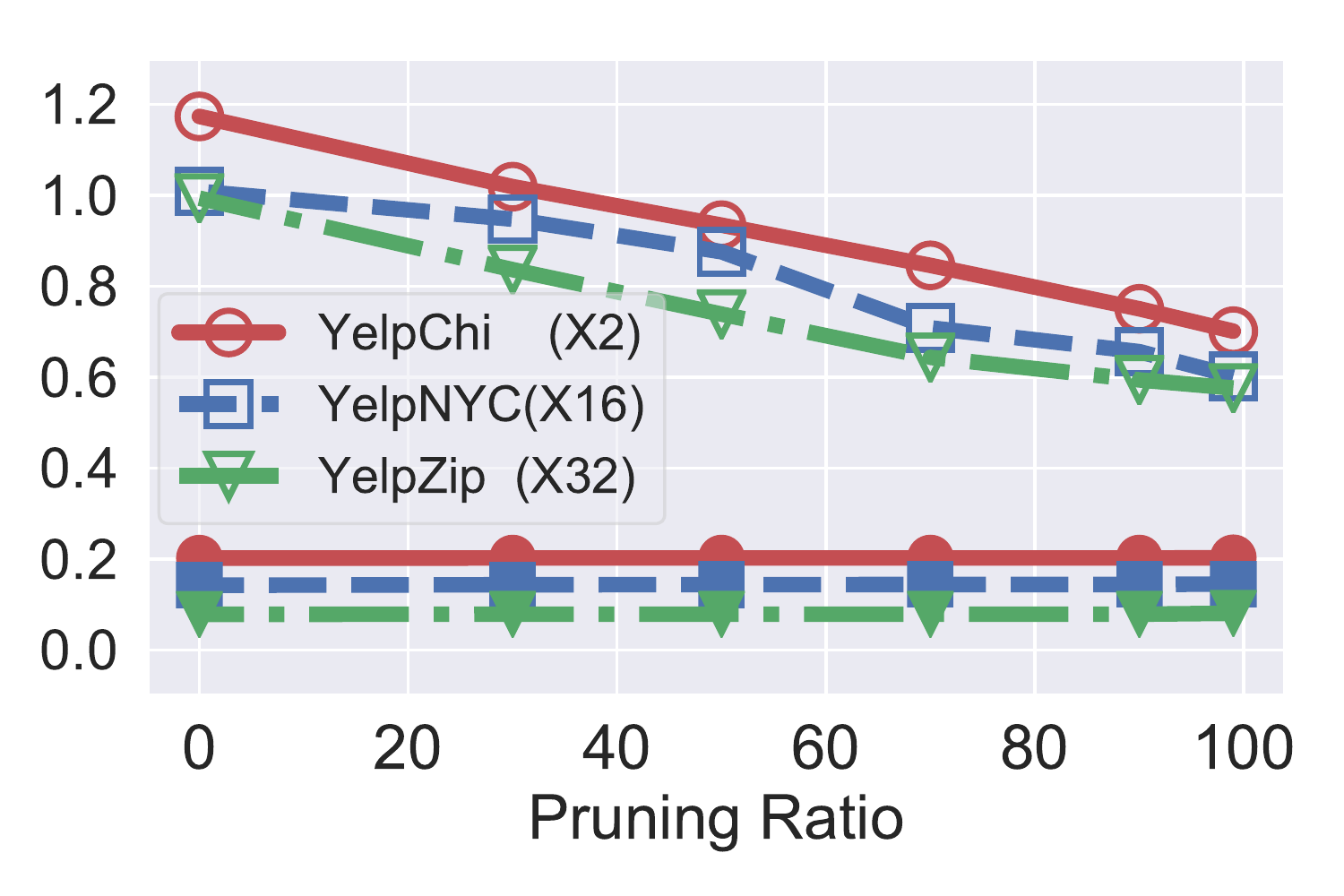}
    
\end{minipage}%
\caption{
    \textbf{Left} the computing time reduces superlinearly as the number of cores increases from 1 to 14. 
    The parentheses enclose the number of hour(s) per unit.
    The square brackets enclose the running time of \textbf{GE-L}. 
    For example, \textbf{GE-G} takes about 192 hours using one core on YelpZip, while \textbf{GE-L} costs 3 hours.
    \textbf{Right} Effect of pruning: as we increase the pruning rate to 99\%,
    the subgraph explanation faithfulness is not degraded (shown by the 3 horizontal lines with solid markers),
    while the running time is reduced by 1/3.}
\label{fig:speedup}
\end{figure}

We run \textbf{GE-G} ($k$=1) on Yelp review networks which have the most nodes to be explained.
The scalability of the algorithm is demonstrated in two aspects in Fig.~\ref{fig:speedup}.
First, 
the search can be done in parallel on multiple cores.
As we increase the number of cores, the running time goes down super-linearly as the number of cores increases from 1 to 14.
Second,
candidate pruning plays a role in speeding up the search.
We use 14 cores and increase the pruning ratio from 0 to 99\% and obtain about $\times 1.7$ speedup at most.
Importantly, the explanation faithfulness is not affected by the pruning, shown by the three lines at the bottom of the right figure (200 $\times$ mean objective).

\begin{figure*}[t]
\centering
\begin{minipage}{.24\textwidth}
\label{fig:cora_kl}
    \includegraphics[width=1\textwidth]{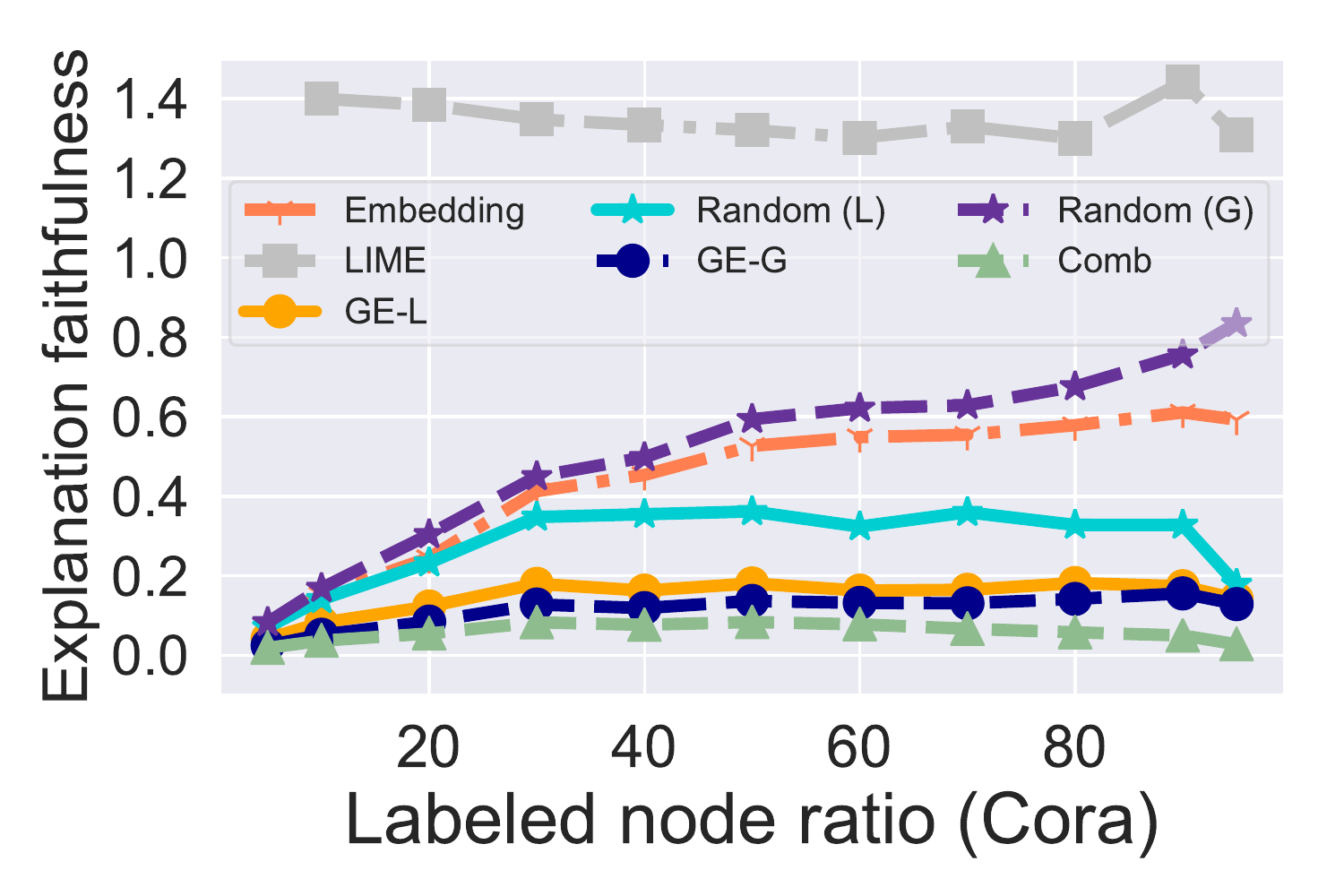}
\end{minipage}%
\begin{minipage}{.24\textwidth}
\label{fig:citeseer_kl}
    \includegraphics[width=1\textwidth]{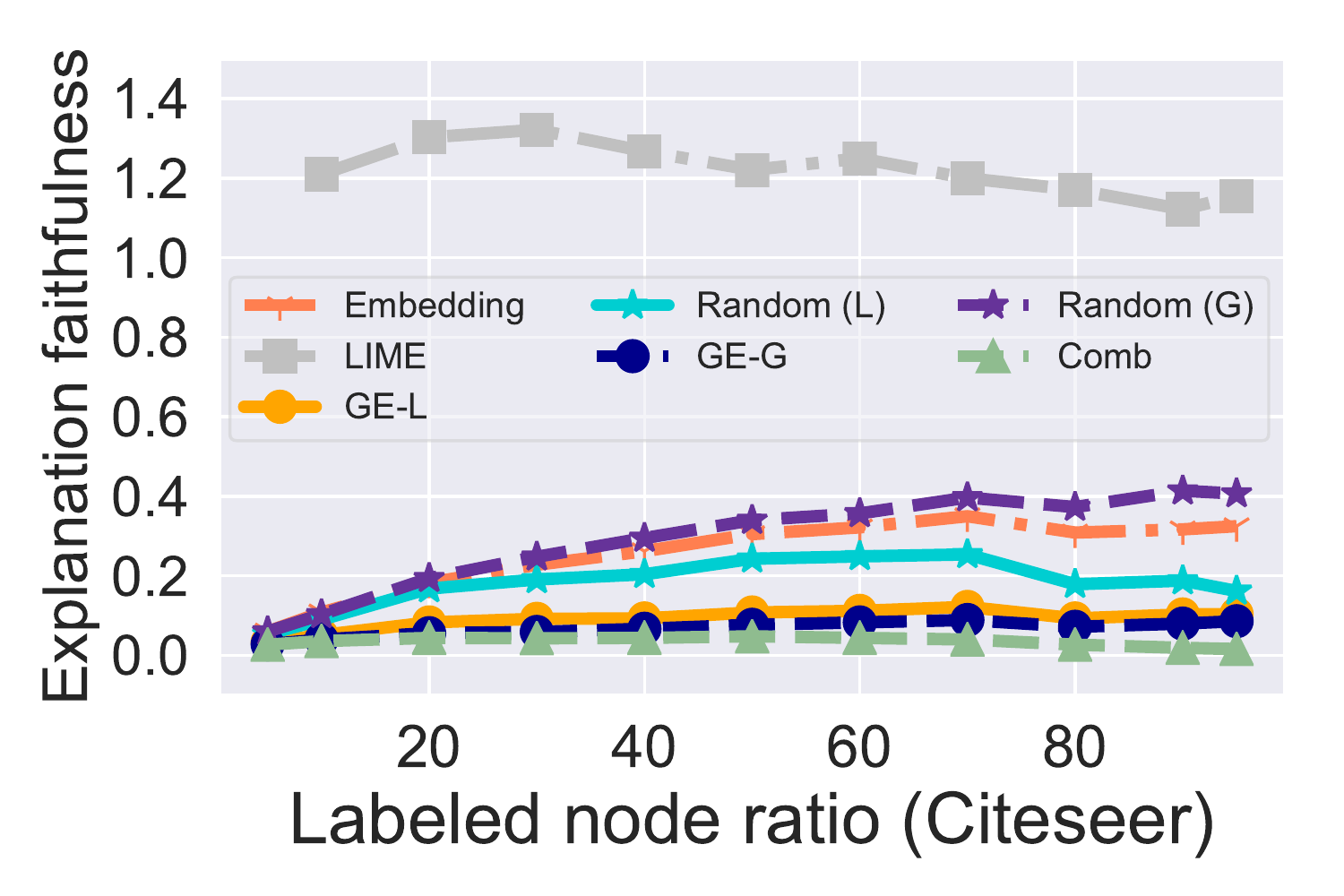}
\end{minipage}%
\begin{minipage}{.24\textwidth}
\label{fig:pubmed}
    \includegraphics[width=1\textwidth]{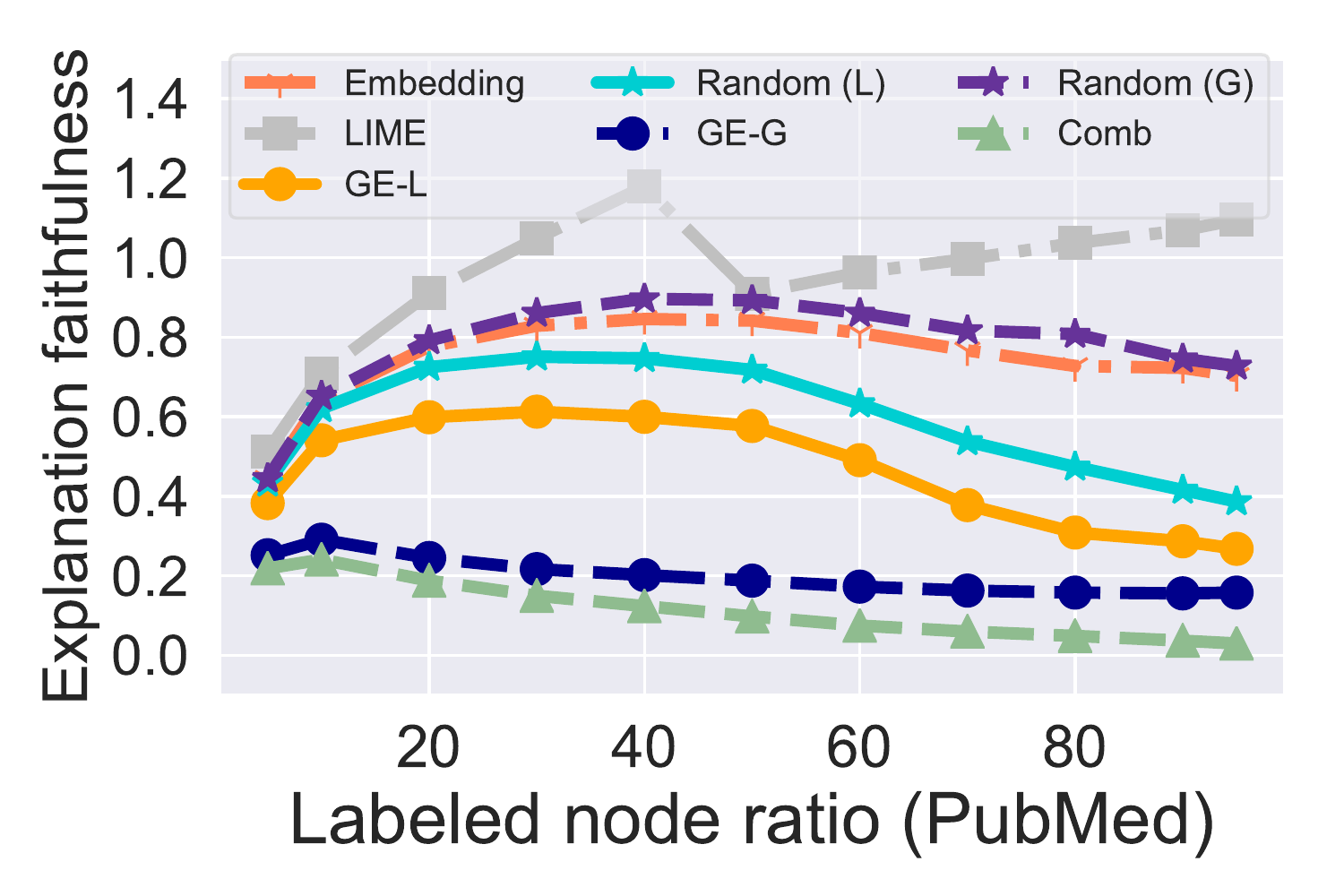}
\end{minipage}%
\begin{minipage}{.24\textwidth}
\label{fig:protein_kl}
    \includegraphics[width=1\textwidth]{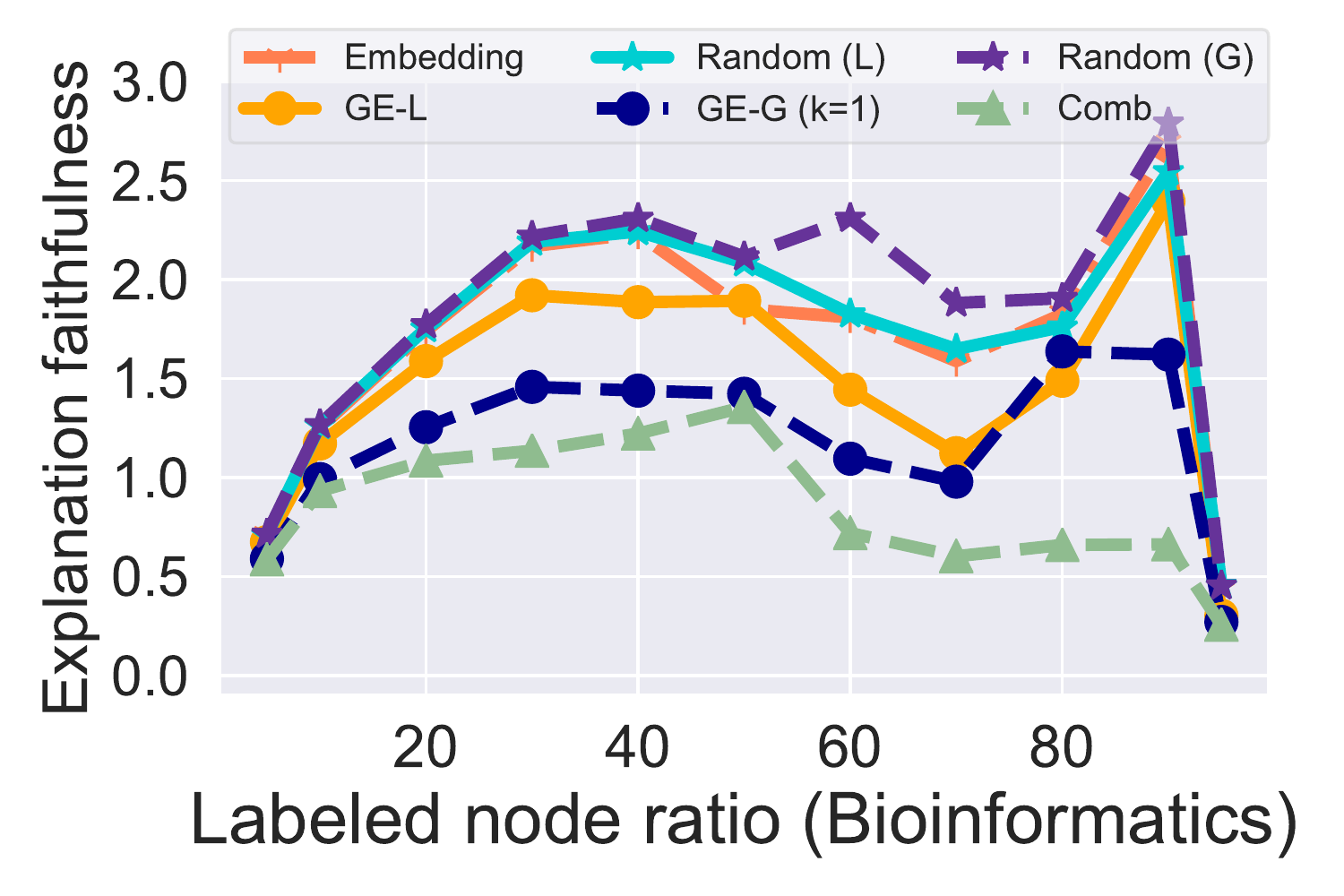}
\end{minipage}%
    \caption{\small 
        Explanation faithfulness (the smaller the better) on citation networks while the ratio of labeled nodes increases. \textbf{Random (L)} is the random baseline that imitate \textbf{GE-L}. Similar for \textbf{Random (G)}.
    }
    \label{fig:citation_overall}
\end{figure*}

\begin{figure*}[t]
\centering
\begin{minipage}{.24\textwidth}
\label{fig:size_zip}
    \includegraphics[width=1\textwidth]{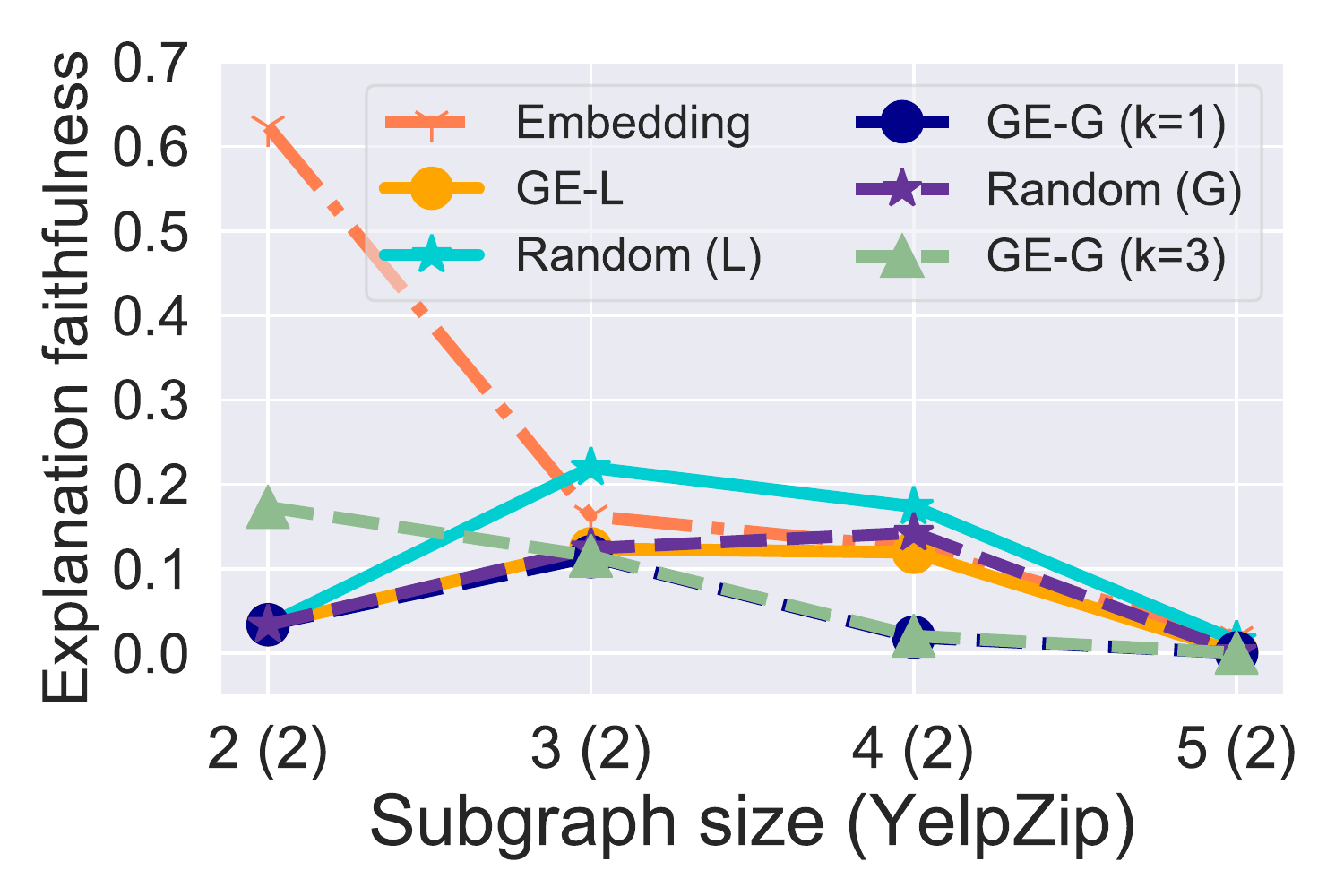}
\end{minipage}%
\begin{minipage}{.24\textwidth}
\label{fig:size_pubmed}
    \includegraphics[width=1\textwidth]{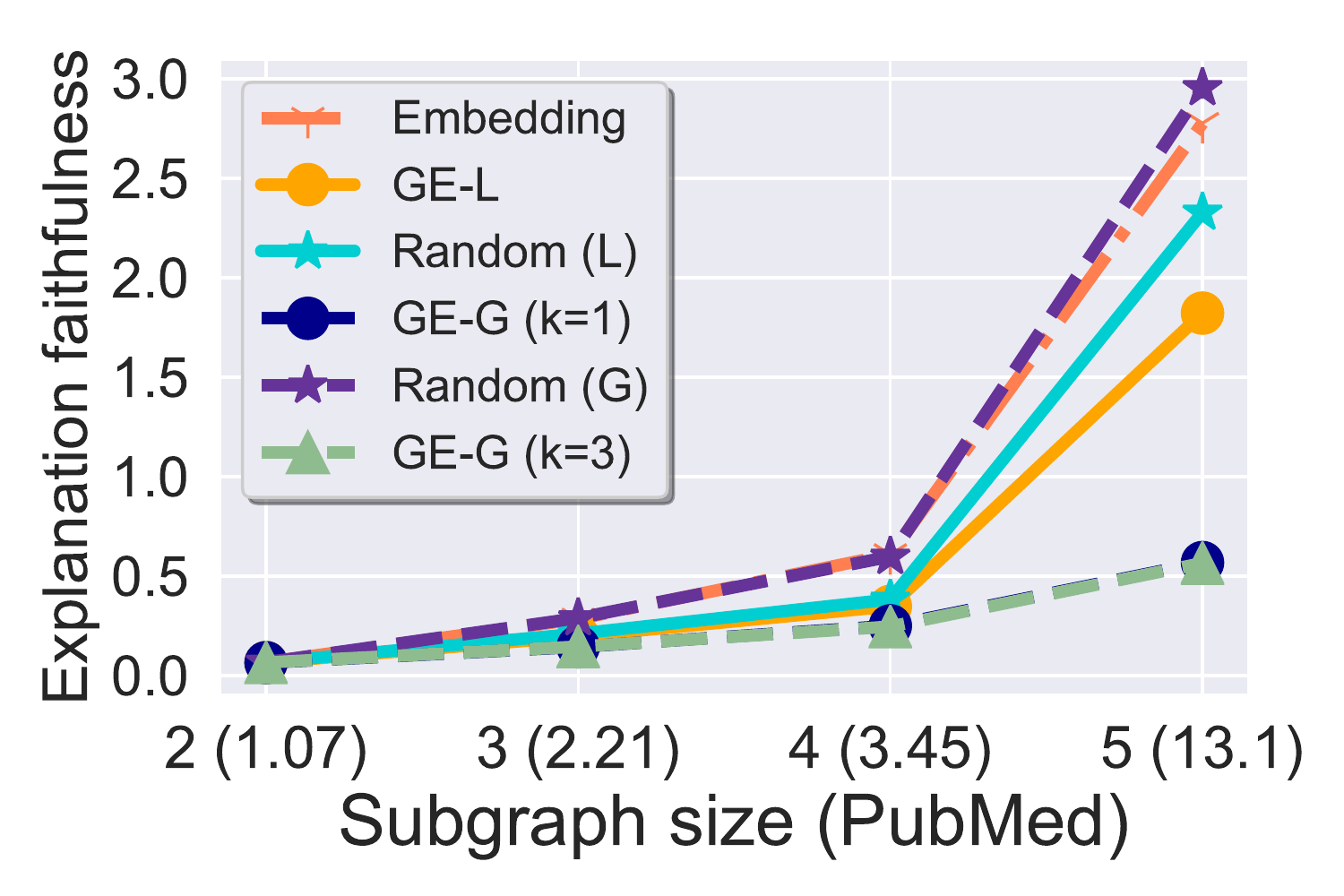}
\end{minipage}
\begin{minipage}{.24\textwidth}
\label{fig:size_blog}
    \includegraphics[width=1\textwidth]{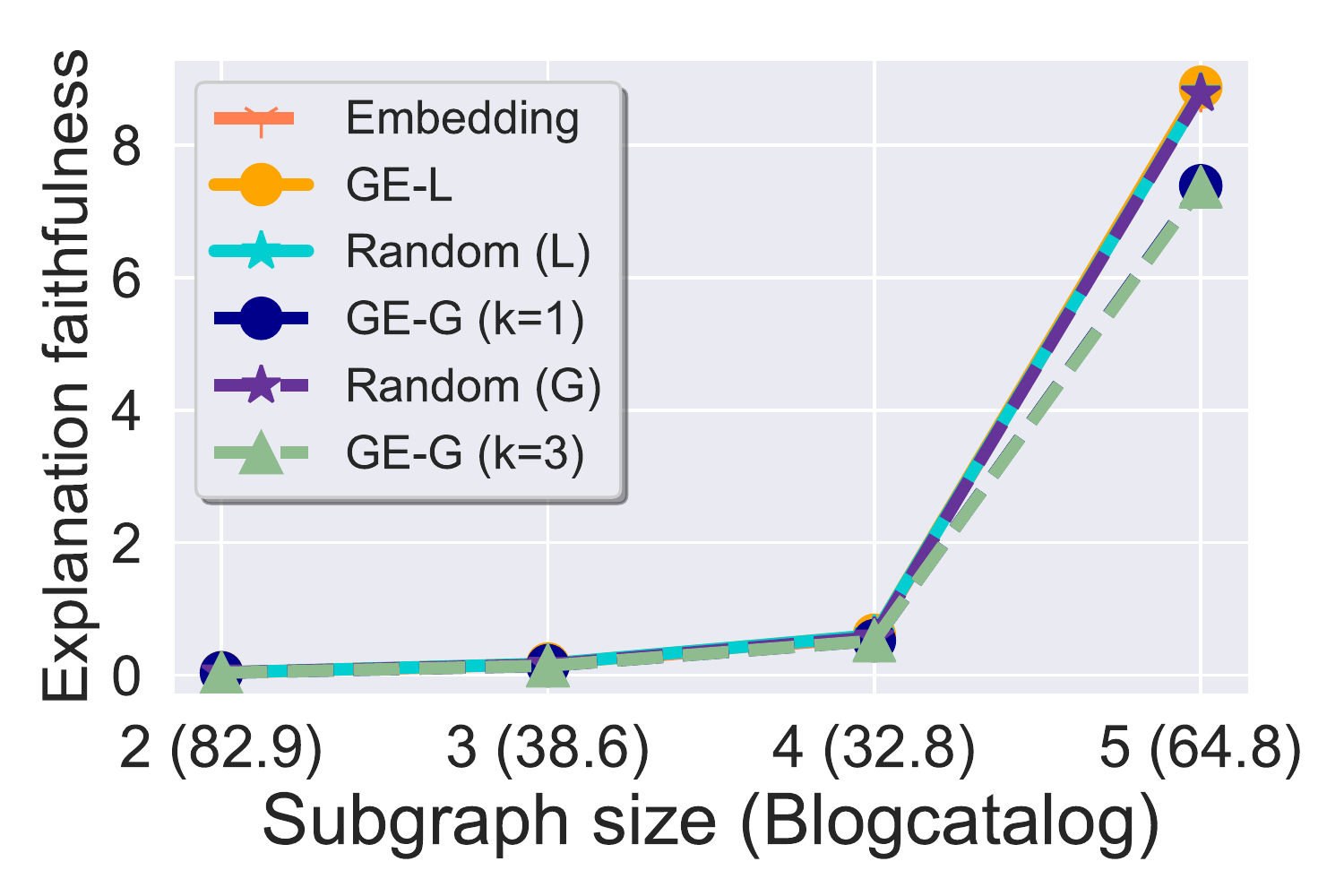}
\end{minipage}
\begin{minipage}{.24\textwidth}
\label{fig:size_bio}
    \includegraphics[width=1\textwidth]{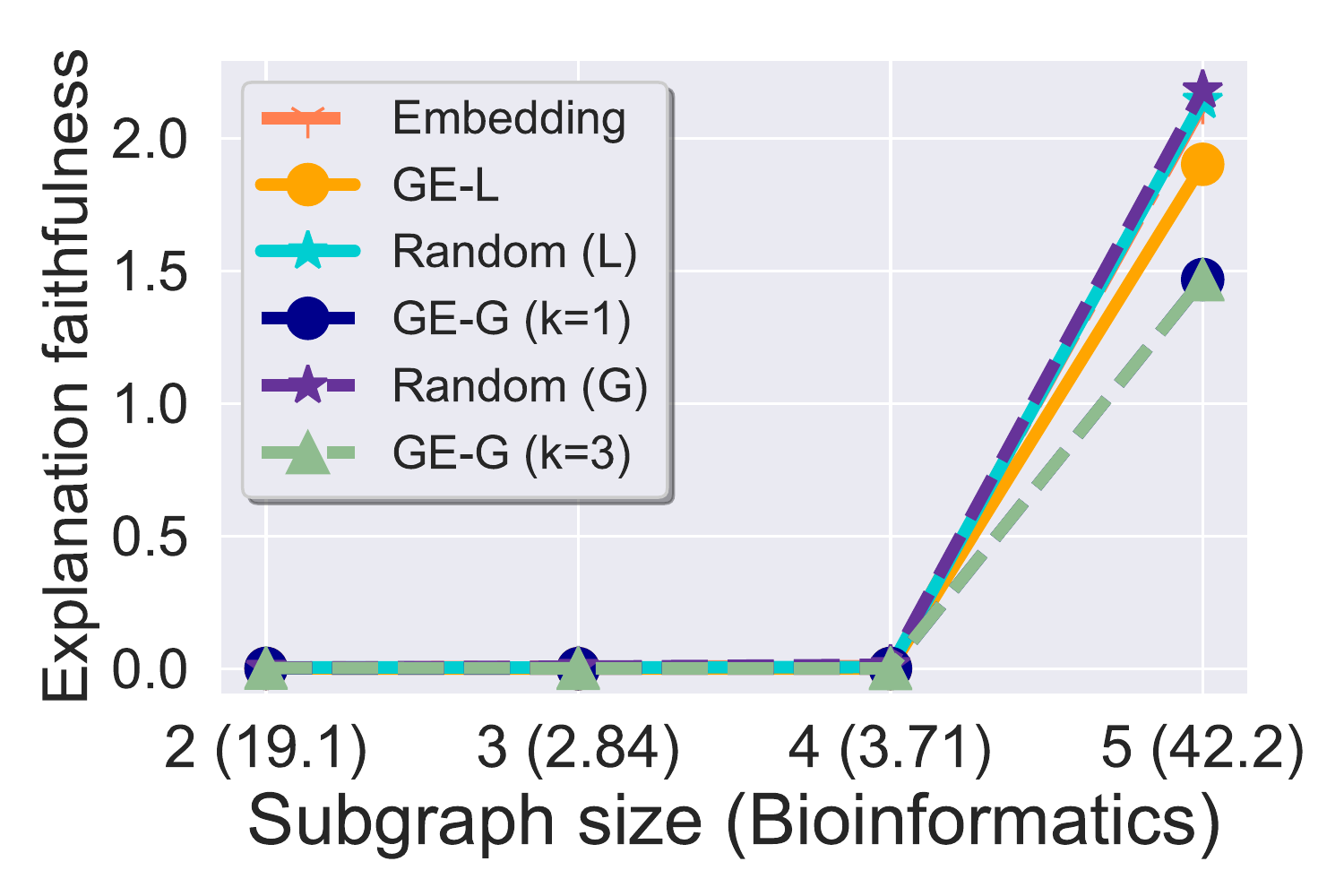}
\end{minipage}
    \caption{\small 
        Relationship between size of subgraph and faithfulness
    }
    \label{fig:size_sen}
\end{figure*}

\subsection{Sensitivity analysis}
There are two hyperparameters for the subgraph search and we study the sensitivity of the explanation faithfulness with respect to these two parameters.
First, when there are labeled nodes on $G$,
the explanation subgraphs may include a labeled node in the explanation,
and the more labeled nodes, the more likely the explanation subgraphs will include these nodes.
The question here is whether including such labeled nodes will improve explanation faithfulness, and does our method require a large number of labeled nodes.
To answer the questions,
on the four networks (Cora, Citeseer, PubMed, and Bioinformatics),
we vary the ratio of labeled nodes 
in the ranges of \{5\%, 10\%, 20\%, 30\%, 40\%, 50\%, 60\%, 70\%, 80\%, 90\%, 95\%\}.
The explanation performances are shown in Fig.~\ref{fig:citation_overall}.
Overall different ratios on all four networks, \textbf{Comb} is the best and \textbf{GE-G} is runner-up. 
On the Cora and Citeseer networks, the performances of all methods are not very sensitive to the ratio of labeled nodes.
On the PubMed network, there are improvements in most baselines except \textbf{LIME} becomes worse.
One can also conclude that \textbf{LIME}, designed for parametric models including deep models, is outperformed by methods designed specifically for MRF.
On the Bioinformatics, \textbf{LIME} is not available and \textbf{GE-G} and \textbf{Comb} have better performance.

Second, subgraphs containing more nodes perform better since they have more contexts of the target.
To evaluate the sensitivity of faithfulness with respect to the subgraph size,
we downsize the subgraphs found by other methods (\textbf{embedding}, \textbf{GE-G} ($k$=1), \textbf{GE-G} ($k$=3), \textbf{Random} imitating \textbf{GE-L} and \textbf{GE-G}) to the same size of those found by \textbf{GE-L},
which may stop growing subgraphs before reaching the cap.
The results obtained on four networks are shown in Fig.~\ref{fig:size_sen}. 
On Yelp,
\textbf{GE-L}, \textbf{GE-G}, and \textbf{Random} perform the same when the size is two
since, with this size, the imitating \textbf{Random} 
has to adopt the same subgraph topology and node type as the subgraph found by \textbf{GE-L} and \textbf{GE-G}.
As we downsize the best subgraphs found by \textbf{GE-G} ($k=3$),
a better subgraph with a large size
may not be optimal when trimmed to size two,
lacking the optimal substructures that facilitate dynamic programming variable can alter the target belief, leading to more insight into and trust on the BP.

\subsection{Explanation visualization and utility}
\label{sec:user_study}
\textbf{Explanation for Ratification}
One of the goals of explanations is to help end-users
understand why the probabilistic beliefs are appropriate given the MRF,
so that they can develop confidence in the beliefs and BP.
This process is called ``ratification''~\cite{suermondt1993explanation}.
To attain this goal, Fig.~\ref{fig:userstudy} displays 
multiple explanations (trees) generated by \textbf{Comb} on the Karate Club network,
with increasing size in Fig.~\ref{fig:userstudy} along with the faithfulness measured by the distance from Eq. (\ref{eq:opt}).
One can see that
the distance decreases exponentially fast as more nodes are added to the explanation.
On each explaining subgraph (tree),
we display beliefs found by BP on $G$ and $\tilde{G}$ so that the user is aware of the gap. 
Since insight is personal~\cite{suermondt1993explanation},
the interface is endowed with flexibility for a user to navigate through alternatives, 
and use multiple metrics
(distributional distance, subgraph size and topology) to select the more sensible ones.
This design also allows the user to see how adding or removing an variable can alter the target belief, leading to more insight into and trust on the BP.

\textbf{Explanation for Verification}
MRF designers can use the explanations for verification of their design,
including network parameters and topology.
We demonstrate how to use the generated subgraphs to identify a security issue in spam detection.
Specifically,
we desire that a spam review should be detected with equal probability by the same spam detector, regardless of how many reviews their authors have posted.
On the one hand, it has been found that 
well-camouflaged elite spammer accounts can deliver better rating promotion~\cite{luca2016reviews}, a fact that has been exploited by dedicated spammers~\cite{zheng2017smoke}.
On the other hand,
a spam detector can be accurate in detecting fake reviews for the wrong reason,
such as the prolificacy of the reviewers~\cite{ross2017right}.

We gather all reviews from the YelpChi dataset,
and from the generated subgraph explanations, create four features, including whether a review is connected to its reviewer and target product, and the degree of its two potential neighbors.
We then train a logistic regression model on these features to predict the probability of a review being a false positive (4,501), a false negative (6,681), and mis-classified (11,182), using predictions based on SpEagle~\cite{rayana2015collective}.
The point is that the above mistakes should not depend on the prolificacy of the connected reviewers and products.
However, we found that a sizable subset of false negatives (1,153 out of 6,681) are due to the influence from the review node only based on our explanations.
Moreover, the logistic model shows that the influence of the degree of the neighboring reivewer contributes the most to the probability of FN.
This is a serious security issue: a spam from an prolific reviewers is more likely to be missed by SpEagle.





\begin{figure}[t]
  \centering
  \includegraphics[width=0.45\textwidth]{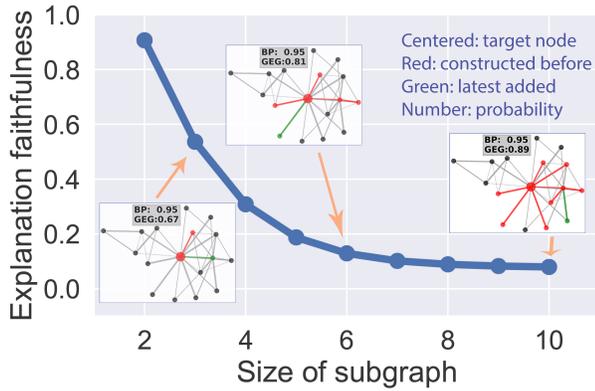}
  \caption{\small
Larger
\textcolor{red}{red} nodes are the nodes explained.
Other \textcolor{red}{red} nodes are the explaining nodes.
The \textcolor{green}{green} nodes are the newly added ones.
  }
  \label{fig:userstudy}
\end{figure}
\section{Related work}
\label{sec:related}

To explain differentiable parametric predictive models, such as deep networks~\cite{koh2017understanding} and linear models~\cite{lou2012intelligible,wojnowicz2016influence},
the gradients of the output with respect to the parameters and input data~\cite{simonyan2013deep}
can signify key factors that explain the output.
However, graphical models aim to model long range and more complicated types of interaction among variables.
If a model is too large or complex to be explained,
an approximating model can provide certain amount of transparency to the full model.
In~\cite{ribeiro2016should, baehrens2010explain},
parametric or non-parametric models are fitted to approximate a more complex model locally.
The idea of approximation is similar to that in GraphExp, with different approximation loss functions.
We have seen in the experiments that a parametric model won't deliver good explanations to the inference outcomes on a graphical model.
In~\cite{krakovna2016increasing}, HMM is trained to approximate an RNN model, resembling the idea of using a simpler grapical model to approximate a more complex graphical model.
However, both HMM and RNN are linear while GraphExp focuses on graphs with more general topology, including cycles.

Explainable bayesian networks were studied in the 1980's and 1990's~\cite{norton1988explanation, suermondt1993explanation},
driven by the needs to verify and communicate the inference outcomes of Bayesian networks in expert systems.
It has long been recognized that
human users are less likely to adopt expert systems without interpretable and probable explanations~\cite{teach1981analysis}.
More recently,
Bayesian networks were formulated as a multi-linear function so that explanations can be facilitated by differentiation~\cite{darwiche2003differential}.
The fundamental difference between GraphExp and these prior works is that we handle MRFs with cycles while they handled Bayesian networks without cycles.
The differentiation-based explanation of MRFs in~\cite{chan2005sensitivity}
finds a set of important netowrk parameters (potentials)
to explain changes in the marginal distribution of a target variable
without explaining any inference procedure.
GraphExp generates graphical models consisting of prominent variables
for reproducing the inference of a BP procedure on a larger graph.
Interpretable graphical models
are also studied under the hood of topic models~\cite{chang2009reading},
where
the focus is to communicate the meaning of the inference outcomes through
measures or prototypes (which words belong to a topic),
rather than explaining how the outcomes are arrived at.


%
\bibliographystyle{plain}
\bibliography{paper.bbl}

\begin{thebibliography}{10}

\bibitem{amaldi1998approximability}
Edoardo Amaldi and Viggo Kann.
\newblock On the approximability of minimizing nonzero variables or unsatisfied
  relations in linear systems.
\newblock {\em Theoretical Computer Science}, 209(1-2):237--260, 1998.

\bibitem{baehrens2010explain}
David Baehrens, Timon Schroeter, Stefan Harmeling, Motoaki Kawanabe, Katja
  Hansen, and Klaus-Robert M{\~A}{\v{z}}ller.
\newblock How to explain individual classification decisions.
\newblock {\em Journal of Machine Learning Research}, 11(Jun):1803--1831, 2010.

\bibitem{batra2012diverse}
Dhruv Batra, Payman Yadollahpour, Abner Guzman-Rivera, and Gregory
  Shakhnarovich.
\newblock Diverse m-best solutions in markov random fields.
\newblock In {\em European Conference on Computer Vision}, pages 1--16.
  Springer, 2012.

\bibitem{caruana1999case}
Rich Caruana, Hooshang Kangarloo, JD~Dionisio, Usha Sinha, and David Johnson.
\newblock Case-based explanation of non-case-based learning methods.
\newblock In {\em Proceedings of the AMIA Symposium}, page 212. American
  Medical Informatics Association, 1999.

\bibitem{chan2005sensitivity}
Hei Chan and Adnan Darwiche.
\newblock Sensitivity analysis in markov networks.
\newblock In {\em International Joint Conference on Artificial Intelligence},
  volume~19, page 1300. LAWRENCE ERLBAUM ASSOCIATES LTD, 2005.

\bibitem{chang2009reading}
Jonathan Chang, Sean Gerrish, Chong Wang, Jordan~L Boyd-Graber, and David~M
  Blei.
\newblock Reading tea leaves: How humans interpret topic models.
\newblock In {\em Advances in neural information processing systems}, pages
  288--296, 2009.

\bibitem{darwiche2003differential}
Adnan Darwiche.
\newblock A differential approach to inference in bayesian networks.
\newblock {\em Journal of the ACM (JACM)}, 50(3):280--305, 2003.

\bibitem{friedman2008sparse}
Jerome Friedman, Trevor Hastie, and Robert Tibshirani.
\newblock Sparse inverse covariance estimation with the graphical lasso.
\newblock {\em Biostatistics}, 9(3):432--441, 2008.

\bibitem{gonzalez2009residual}
Joseph Gonzalez, Yucheng Low, and Carlos Guestrin.
\newblock Residual splash for optimally parallelizing belief propagation.
\newblock In {\em Artificial Intelligence and Statistics}, pages 177--184,
  2009.

\bibitem{goodman2017european}
Bryce Goodman and Seth Flaxman.
\newblock European union regulations on algorithmic decision-making and a
  “right to explanation”.
\newblock {\em AI Magazine}, 38(3):50--57, 2017.

\bibitem{guyon2003introduction}
Isabelle Guyon and Andr{\'e} Elisseeff.
\newblock An introduction to variable and feature selection.
\newblock {\em Journal of machine learning research}, 3(Mar):1157--1182, 2003.

\bibitem{hazan2010norm}
Tamir Hazan and Amnon Shashua.
\newblock Norm-product belief propagation: Primal-dual message-passing for
  approximate inference.
\newblock {\em IEEE Transactions on Information Theory}, 56(12):6294--6316,
  2010.

\bibitem{honorio2012variable}
Jean Honorio, Dimitris Samaras, Irina Rish, and Guillermo Cecchi.
\newblock Variable selection for gaussian graphical models.
\newblock In {\em Artificial Intelligence and Statistics}, pages 538--546,
  2012.

\bibitem{huang2011efficient}
Qiang Huang, Ling-Yun Wu, and Xiang-Sun Zhang.
\newblock An efficient network querying method based on conditional random
  fields.
\newblock {\em Bioinformatics}, 27(22):3173--3178, 2011.

\bibitem{jo2018fast}
Saehan Jo, Jaemin Yoo, and U~Kang.
\newblock Fast and scalable distributed loopy belief propagation on real-world
  graphs.
\newblock In {\em Proceedings of the Eleventh ACM International Conference on
  Web Search and Data Mining}, pages 297--305. ACM, 2018.

\bibitem{johnson2016composing}
Matthew Johnson, David~K Duvenaud, Alex Wiltschko, Ryan~P Adams, and Sandeep~R
  Datta.
\newblock Composing graphical models with neural networks for structured
  representations and fast inference.
\newblock In {\em Advances in neural information processing systems}, pages
  2946--2954, 2016.

\bibitem{kang2010inference}
U~Kang, Duen Horng, et~al.
\newblock Inference of beliefs on billion-scale graphs.
\newblock 2010.

\bibitem{kindermann1980markov}
Ross Kindermann.
\newblock Markov random fields and their applications.
\newblock {\em American mathematical society}, 1980.

\bibitem{koh2017understanding}
Pang~Wei Koh and Percy Liang.
\newblock Understanding black-box predictions via influence functions.
\newblock In {\em Proceedings of the 34th International Conference on Machine
  Learning-Volume 70}, pages 1885--1894. JMLR. org, 2017.

\bibitem{koller2009probabilistic}
Daphne Koller and Nir Friedman.
\newblock {\em Probabilistic graphical models: principles and techniques}.
\newblock MIT press, 2009.

\bibitem{krakovna2016increasing}
Viktoriya Krakovna and Finale Doshi-Velez.
\newblock Increasing the interpretability of recurrent neural networks using
  hidden markov models.
\newblock {\em arXiv preprint arXiv:1606.05320}, 2016.

\bibitem{lage2019evaluation}
Isaac Lage, Emily Chen, Jeffrey He, Menaka Narayanan, Been Kim, Sam Gershman,
  and Finale Doshi-Velez.
\newblock An evaluation of the human-interpretability of explanation.
\newblock {\em arXiv preprint arXiv:1902.00006}, 2019.

\bibitem{lei2016rationalizing}
Tao Lei, Regina Barzilay, and Tommi Jaakkola.
\newblock Rationalizing neural predictions.
\newblock {\em arXiv preprint arXiv:1606.04155}, 2016.

\bibitem{li2007protein}
Ming-Hui Li, Lei Lin, Xiao-Long Wang, and Tao Liu.
\newblock Protein--protein interaction site prediction based on conditional
  random fields.
\newblock {\em Bioinformatics}, 23(5):597--604, 2007.

\bibitem{lombrozo2006structure}
Tania Lombrozo.
\newblock The structure and function of explanations.
\newblock {\em Trends in cognitive sciences}, 10(10):464--470, 2006.

\bibitem{lou2012intelligible}
Yin Lou, Rich Caruana, and Johannes Gehrke.
\newblock Intelligible models for classification and regression.
\newblock In {\em Proceedings of the 18th ACM SIGKDD international conference
  on Knowledge discovery and data mining}, pages 150--158. ACM, 2012.

\bibitem{luca2016reviews}
Michael Luca.
\newblock Reviews, reputation, and revenue: The case of yelp. com.
\newblock {\em Com (March 15, 2016). Harvard Business School NOM Unit Working
  Paper}, (12-016), 2016.

\bibitem{miller1956magical}
George~A Miller.
\newblock The magical number seven, plus or minus two: Some limits on our
  capacity for processing information.
\newblock {\em Psychological review}, 63(2):81, 1956.

\bibitem{namata2011collective}
Galileo~Mark Namata, Stanley Kok, and Lise Getoor.
\newblock Collective graph identification.
\newblock In {\em Proceedings of the 17th ACM SIGKDD international conference
  on Knowledge discovery and data mining}, pages 87--95. ACM, 2011.

\bibitem{nemhauser1978analysis}
George~L Nemhauser, Laurence~A Wolsey, and Marshall~L Fisher.
\newblock An analysis of approximations for maximizing submodular set
  functions.
\newblock {\em Mathematical programming}, 14(1):265--294, 1978.

\bibitem{norton1988explanation}
Steven~W Norton.
\newblock An explanation mechanism for bayesian inferencing systems.
\newblock In {\em Machine Intelligence and Pattern Recognition}, volume~5,
  pages 165--173. Elsevier, 1988.

\bibitem{pearl2014probabilistic}
Judea Pearl.
\newblock {\em Probabilistic reasoning in intelligent systems: networks of
  plausible inference}.
\newblock Elsevier, 2014.

\bibitem{perozzi2014deepwalk}
Bryan Perozzi, Rami Al-Rfou, and Steven Skiena.
\newblock Deepwalk: Online learning of social representations.
\newblock In {\em Proceedings of the 20th ACM SIGKDD international conference
  on Knowledge discovery and data mining}, pages 701--710. ACM, 2014.

\bibitem{rayana2015collective}
Shebuti Rayana and Leman Akoglu.
\newblock Collective opinion spam detection: Bridging review networks and
  metadata.
\newblock In {\em Proceedings of the 21th acm sigkdd international conference
  on knowledge discovery and data mining}, pages 985--994. ACM, 2015.

\bibitem{ribeiro2016should}
Marco~Tulio Ribeiro, Sameer Singh, and Carlos Guestrin.
\newblock Why should i trust you?: Explaining the predictions of any
  classifier.
\newblock In {\em Proceedings of the 22nd ACM SIGKDD international conference
  on knowledge discovery and data mining}, pages 1135--1144. ACM, 2016.

\bibitem{ross2017right}
Andrew~Slavin Ross, Michael~C Hughes, and Finale Doshi-Velez.
\newblock Right for the right reasons: Training differentiable models by
  constraining their explanations.
\newblock {\em arXiv preprint arXiv:1703.03717}, 2017.

\bibitem{russell2019efficient}
Chris Russell.
\newblock Efficient search for diverse coherent explanations.
\newblock {\em arXiv preprint arXiv:1901.04909}, 2019.

\bibitem{ryder1979constructing}
Barbara~G Ryder.
\newblock Constructing the call graph of a program.
\newblock {\em IEEE Transactions on Software Engineering}, (3):216--226, 1979.

\bibitem{simonyan2013deep}
Karen Simonyan, Andrea Vedaldi, and Andrew Zisserman.
\newblock Deep inside convolutional networks: Visualising image classification
  models and saliency maps.
\newblock {\em arXiv preprint arXiv:1312.6034}, 2013.

\bibitem{stumpf2008integrating}
Simone Stumpf, Erin Sullivan, Erin Fitzhenry, Ian Oberst, Weng-Keen Wong, and
  Margaret Burnett.
\newblock Integrating rich user feedback into intelligent user interfaces.
\newblock In {\em Proceedings of the 13th international conference on
  Intelligent user interfaces}, pages 50--59. ACM, 2008.

\bibitem{suermondt1993explanation}
Henri~J Suermondt.
\newblock Explanation in bayesian belief networks.
\newblock 1993.

\bibitem{sutton2012introduction}
Charles Sutton, Andrew McCallum, et~al.
\newblock An introduction to conditional random fields.
\newblock {\em Foundations and Trends{\textregistered} in Machine Learning},
  4(4):267--373, 2012.

\bibitem{tang2009relational}
Lei Tang and Huan Liu.
\newblock Relational learning via latent social dimensions.
\newblock In {\em Proceedings of the 15th ACM SIGKDD international conference
  on Knowledge discovery and data mining}, pages 817--826. ACM, 2009.

\bibitem{tang2011learning}
Wenbin Tang, Honglei Zhuang, and Jie Tang.
\newblock Learning to infer social ties in large networks.
\newblock In {\em Joint european conference on machine learning and knowledge
  discovery in databases}, pages 381--397. Springer, 2011.

\bibitem{teach1981analysis}
Randy~L Teach and Edward~H Shortliffe.
\newblock An analysis of physician attitudes regarding computer-based clinical
  consultation systems.
\newblock {\em Computers and Biomedical Research}, 14(6):542--558, 1981.

\bibitem{wachter2017counterfactual}
Sandra Wachter, Brent Mittelstadt, and Chris Russell.
\newblock Counterfactual explanations without opening the black box: Automated
  decisions and the gpdr.
\newblock {\em Harv. JL \& Tech.}, 31:841, 2017.

\bibitem{wojnowicz2016influence}
Mike Wojnowicz, Ben Cruz, Xuan Zhao, Brian Wallace, Matt Wolff, Jay Luan, and
  Caleb Crable.
\newblock “influence sketching”: Finding influential samples in large-scale
  regressions.
\newblock In {\em 2016 IEEE International Conference on Big Data (Big Data)},
  pages 3601--3612. IEEE, 2016.

\bibitem{yanover2006linear}
Chen Yanover, Talya Meltzer, and Yair Weiss.
\newblock Linear programming relaxations and belief propagation--an empirical
  study.
\newblock {\em Journal of Machine Learning Research}, 7(Sep):1887--1907, 2006.

\bibitem{zheng2017smoke}
Haizhong Zheng, Minhui Xue, Hao Lu, Shuang Hao, Haojin Zhu, Xiaohui Liang, and
  Keith Ross.
\newblock Smoke screener or straight shooter: Detecting elite sybil attacks in
  user-review social networks.
\newblock {\em arXiv preprint arXiv:1709.06916}, 2017.

\bibitem{zitnik2017predicting}
Marinka Zitnik and Jure Leskovec.
\newblock Predicting multicellular function through multi-layer tissue
  networks.
\newblock {\em Bioinformatics}, 33(14):i190--i198, 2017.

\end{thebibliography}

\end{document}